\pgfplotsset{compat=1.5, width=10cm}
\newtheorem{construction}{Construction}
\newcommand{\dshift}{\!\upharpoonright\!}
\newcommand{\shift}{\!\mapsfrom\!}
\newcommand{\shadecolor}{black!20}
\newcommand{\B}{\ensuremath{\mathcal{B}}}
\newcommand{\C}{\ensuremath{\mathcal{C}}}
\renewcommand{\L}{\ensuremath{\mathcal{L}}}
\newcommand{\R}{\ensuremath{\mathcal{R}}}
\renewcommand{\S}{\ensuremath{\mathcal{S}}}
\newcommand{\T}{\ensuremath{\mathcal{T}}}
\newcommand{\W}{\ensuremath{\mathcal{W}}}
\newcommand{\Z}{\ensuremath{\mathcal{Z}}}
\newcommand{\North}{\ensuremath{\vec{\text{\sc n}}}}
\newcommand{\East}{\ensuremath{\vec{\text{\sc e}}}}
\newcommand{\South}{\ensuremath{\vec{\text{\sc s}}}}
\newcommand{\West}{\ensuremath{\vec{\text{\sc w}}}}
\newcommand{\north}{\ensuremath{\vec{\text{\sc n}}}}
\newcommand{\east}{\ensuremath{\vec{\text{\sc e}}}}
\newcommand{\south}{\ensuremath{\vec{\text{\sc s}}}}
\newcommand{\west}{\ensuremath{\vec{\text{\sc w}}}}
\newcommand{\gcolor}{\ensuremath{{color}}}
\newcommand{\tcolor}{\gcolor} % \tcolor is deprecated
\newcommand{\gstrength}{\ensuremath{{strength}}}
\newcommand{\tstrength}{\gstrength} %tstrength is deprecated
\newcommand{\tlabel}{\ensuremath{{\ell}}}
\newcommand{\row}{\ensuremath{{row}}}
\newcommand{\col}{\ensuremath{{col}}}
\newcommand{\block}{\ensuremath{{block}}}
\newcommand{\first}{\ensuremath{{first}}}
\newcommand{\second}{\ensuremath{{second}}}
\newcommand{\tile}[8]{
	\fill[#3] (#1+.05,#2+.05) rectangle (#1+1.95,#2+1.95);
	\draw[black]  (#1+.057,#2+.057) -- (#1+1.945,#2+.057);
	\draw[black]  (#1+1.9422,#2+.05) -- (#1+1.9422,#2+1.9493);
	\draw[black]  (#1+.055,#2+1.9423) -- (#1+1.945,#2+1.9423);
	\draw[black]  (#1+.0577,#2+.0505) -- (#1+.0577,#2+1.9493);
	
	 \node at (#1+1,#2+1) {\footnotesize #4};
	 
	 \node[label={[label distance=-1.55mm]below:\rotatebox{0}{\scriptsize #5}}] at (#1+1,#2+2) {};
	 \node[label={[label distance=-1.55mm]left:\rotatebox{270}{\scriptsize #6}}] at (#1+2,#2+1) {};
	 \node[label={[label distance=-1.55mm]above:\rotatebox{0}{\scriptsize #7}}] at (#1+1,#2) {};
	 \node[label={[label distance=-1.55mm]right:\rotatebox{270}{\scriptsize #8}}] at (#1,#2+1) {}; 
}
\newcommand{\MEDIUMtile}[8]{
	\fill[#3] (#1+.05,#2+.05) rectangle (#1+1.95,#2+1.95);
	\draw[black]  (#1+.057,#2+.057) -- (#1+1.945,#2+.057);
	\draw[black]  (#1+1.9422,#2+.05) -- (#1+1.9422,#2+1.9493);
	\draw[black]  (#1+.055,#2+1.9423) -- (#1+1.945,#2+1.9423);
	\draw[black]  (#1+.0577,#2+.0505) -- (#1+.0577,#2+1.9493);
	
	 \node at (#1+1,#2+1) {\normalsize #4};
	 
	 \node[label={[label distance=-1mm]below:\rotatebox{0}{\small #5}}] at (#1+1,#2+2) {};
	 \node[label={[label distance=-1mm]left:\rotatebox{270}{\small #6}}] at (#1+2,#2+1) {};
	 \node[label={[label distance=-1mm]above:\rotatebox{0}{\small #7}}] at (#1+1,#2) {};
	 \node[label={[label distance=-1mm]right:\rotatebox{270}{\small #8}}] at (#1,#2+1) {}; 
}
\newcommand{\LARGEtile}[8]{
	\fill[#3] (#1+.05,#2+.05) rectangle (#1+1.95,#2+1.95);
	\draw[black]  (#1+.057,#2+.057) -- (#1+1.945,#2+.057);
	\draw[black]  (#1+1.9422,#2+.05) -- (#1+1.9422,#2+1.9493);
	\draw[black]  (#1+.055,#2+1.9423) -- (#1+1.945,#2+1.9423);
	\draw[black]  (#1+.0577,#2+.0505) -- (#1+.0577,#2+1.9493);
	
	 \node at (#1+1,#2+1) {\normalsize #4};
	 
	 \node[label={[label distance=0mm]below:\rotatebox{0}{\small #5}}] at (#1+1,#2+2) {};
	 \node[label={[label distance=0mm]left:\rotatebox{270}{\small #6}}] at (#1+2,#2+1) {};
	 \node[label={[label distance=0mm]above:\rotatebox{0}{\small #7}}] at (#1+1,#2) {};
	 \node[label={[label distance=0mm]right:\rotatebox{270}{\small #8}}] at (#1,#2+1) {}; 
}
\newcommand{\seed}[8]{
	\fill[#3] (#1+.05,#2+.05) rectangle (#1+1.95,#2+1.95);
	\draw[black,dotted]  (#1+.072,#2+.057) -- (#1+1.922,#2+.057);
	\draw[black,ultra thick]  (#1+1.922,#2+.05) -- (#1+1.922,#2+1.95);
	\draw[black,ultra thick]  (#1+.05,#2+1.922) -- (#1+1.95,#2+1.922);
	\draw[black,dotted]  (#1+.057,#2+.072) -- (#1+.057,#2+1.922);
	
	 \node at (#1+1,#2+1) {\footnotesize #4};
	 
	 \node[label={[label distance=-1.55mm]below:\rotatebox{0}{\scriptsize #5}}] at (#1+1,#2+2) {};
	 \node[label={[label distance=-1.55mm]left:\rotatebox{270}{\scriptsize #6}}] at (#1+2,#2+1) {};
	 \node[label={[label distance=-1.55mm]above:\rotatebox{0}{\scriptsize #7}}] at (#1+1,#2) {};
	 \node[label={[label distance=-1.55mm]right:\rotatebox{270}{\scriptsize #8}}] at (#1,#2+1) {}; 
}
\newcommand{\vtile}[8]{
	\fill[#3] (#1+.05,#2+.05) rectangle (#1+1.95,#2+1.95);
	\draw[black, ultra thick]  (#1+.05,#2+.078) -- (#1+1.945,#2+.078);
	\draw[black]  (#1+1.943,#2+.05) -- (#1+1.943,#2+1.9493);
	\draw[black,ultra thick]  (#1+.05,#2+1.922) -- (#1+1.95,#2+1.922);
	\draw[black,dotted]  (#1+.057,#2+.072) -- (#1+.057,#2+1.922);
	
	 \node at (#1+1,#2+1) {\footnotesize #4};
	 
	 \node[label={[label distance=-1.55mm]below:\rotatebox{0}{\scriptsize #5}}] at (#1+1,#2+2) {};
	 \node[label={[label distance=-1.55mm]left:\rotatebox{270}{\scriptsize #6}}] at (#1+2,#2+1) {};
	 \node[label={[label distance=-1.55mm]above:\rotatebox{0}{\scriptsize #7}}] at (#1+1,#2) {};
	 \node[label={[label distance=-1.55mm]right:\rotatebox{270}{\scriptsize #8}}] at (#1,#2+1) {}; 
}
\newcommand{\htile}[8]{
	\fill[#3] (#1+.05,#2+.05) rectangle (#1+1.95,#2+1.95);
	\draw[black,dotted]  (#1+.072,#2+.057) -- (#1+1.922,#2+.057);
	\draw[black,ultra thick]  (#1+1.922,#2+.05) -- (#1+1.922,#2+1.95);
	\draw[black]  (#1+.055,#2+1.9423) -- (#1+1.945,#2+1.9423);
	\draw[black,ultra thick]  (#1+.077,#2+.0505) -- (#1+.077,#2+1.9493);
	
	 \node at (#1+1,#2+1) {\footnotesize #4};
	 
	 \node[label={[label distance=-1.55mm]below:\rotatebox{0}{\scriptsize #5}}] at (#1+1,#2+2) {};
	 \node[label={[label distance=-1.55mm]left:\rotatebox{270}{\scriptsize #6}}] at (#1+2,#2+1) {};
	 \node[label={[label distance=-1.55mm]above:\rotatebox{0}{\scriptsize #7}}] at (#1+1,#2) {};
	 \node[label={[label distance=-1.55mm]right:\rotatebox{270}{\scriptsize #8}}] at (#1,#2+1) {}; 
}
\begin{document}

%% Front Matter
%!TEX root = ../Main.tex
% Title Information
\title{Compact Error-Resilient Self-Assembly of Recursively Defined Patterns}
\titlerunning{} 

\author{Brad Shutters\inst{1} \and
Timothy P. Hartke Jr.\inst{1} \and
Robert J. Sammelson\inst{2}}

\authorrunning{B. Shutters et al.}

\institute{Ball State University, Muncie, IN 47306, USA \email{\{bsshutters,tphartke\}@bsu.edu} \and
Burris Laboratory School, Muncie, IN 47306, USA \email{rjsammelson@bsu.edu}}

\maketitle            
%!TEX root = ../Main.tex
\begin{abstract}
%Briefly summarize the contents of the paper in 150--250 words.
A limitation to molecular implementations of tile-based self-assembly systems is the high rate of mismatch errors which has been observed to be between 1\% and 10\%.
Controlling the physical conditions of the system to reduce this intrinsic error rate $\epsilon$ prohibitively slows the growth rate of the system. 
This has motivated the development of techniques to redundantly encode information in the tiles of a system in such a way that the rate of mismatch errors in the final assembly is reduced even without a reduction in $\epsilon$. 
Winfree and Bekbolatov, and Chen and Goel, introduced such error-resilient systems that reduce the mismatch error rate to $\epsilon^k$ by replacing each tile in an error-prone system with a $k \times k$ block of tiles in the error-resilient system, but this increases the number of tile types used by a factor of $k^2$, and the scale of the pattern produced by a factor of $k$. 
Reif, Sahu and Yin, and Sahu and Reif, introduced compact error-resilient systems for the self-assembly of Boolean arrays that reduce the mismatch error rate to $\epsilon^2$ without increasing the scale of the pattern produced. 
In this paper, we give a technique to design compact error-resilient systems for the self-assembly of the recursively defined patterns introduced by Kautz and Lathrop. We show that our compact error-resilient systems reduce the mismatch error rate to $\epsilon^2$ by using the independent error model introduced by Sahu and Reif. Surprisingly, our error-resilient systems use the same number of tile types as the error-prone system from which they are constructed. 
%We summarize the results of some computer simulations to further verify our compact error-resilient systems.
%\keywords{Error-Resilient\,Self-Assembly \and Recursively Defined Patterns.}
\end{abstract}
%
%% Introduction
%!TEX root = ../Main.tex

\section{Introduction}

In an algorithmic self-assembly of DNA molecules, information is encoded into the sticky ends of the molecules in such a way that when the molecules are placed in solution, they self-assemble into some target pattern, structure, or computation \cite{winfree1999universal}. Although implementing this process in laboratory experiments is technically challenging \cite{rothemund2004algorithmic}, it has many applications to nanotechnology \cite{seeman2003dna}. 

A standard computational model for DNA self-assembly is the {\em abstract Tile Assembly Model} (aTAM) introduced in \cite{rothemund2000program}. 
In this model, a DNA molecule is represented by a {\em tile} which is a unit square that can be translated, but not rotated, so that each of its four sides are well-defined. On each side of a tile is a {\em glue} that is made up of both a non-negative integer {\em strength} (usually 0, 1, or 2), and a string (over some finite alphabet) which is called the glue's {\em color}. Each glue represents the information encoded into a sticky end of the molecule.
We say that two tiles are of the same {\em type} if they have the identical glues on each of their corresponding sides (in both color and strength). Addtionally, each tile $t$ has a {\em label}, denoted by $\tlabel(t)$, that is associated with its tile type, but does not play a role in the assembly process.
Several tile types are illustrated in Fig. \ref{fig:TileSet}. 

\begin{figure}[!t]
\centering
\subfloat[][A set of tile types.] {\label{fig:TileSet}
	\hspace{1em}
	\begin{tikzpicture}[scale=.41]
		
		 \draw[rounded corners] (-.35,6.45) rectangle (4.9,12.4);
		 \node at (2.2,11.8) {{\footnotesize Interior}};
		 
		 \tile{0}{9.1}{\shadecolor}{1}{1}{1}{1}{0}
		 \tile{2.5}{9.1}{\shadecolor}{1}{1}{1}{0}{1}	
		 
		 \tile{0}{6.8}{white}{0}{0}{0}{1}{1}
		 \tile{2.5}{6.8}{white}{0}{0}{0}{0}{0}
	
		 \draw[rounded corners] (-.35,0) rectangle (4.9,6.1);
		 \node at (2.2,5.4) {{\footnotesize Boundary}};
		 
		 \vtile{0}{2.75}{\shadecolor}{1}{1}{1}{1}{}	 
		 \htile{2.5}{2.75}{\shadecolor}{1}{1}{1}{}{1}
		 
		 \node at (1,1.35) {$\sigma=$};
		 \seed{1.75}{.35}{\shadecolor}{1}{1}{1}{}{}
		
		 \node at (0,0) {};
	\end{tikzpicture}
	\hspace{1em}
}
\hspace{1em}
\subfloat[][The self-assembly process.] {\label{fig:AssemblyProcess}
	\hspace{1em}
	\begin{tikzpicture}[scale=.41]
		
		\seed{0}{0}{\shadecolor}{$1$}{1}{1}{}{}
		
		\foreach \i in {1,...,5}
		{
  			 \vtile{0}{2*\i}{\shadecolor}{1}{1}{1}{1}{}	
		}
		
		\foreach \i in {1,...,6}
		{					 
			\htile{2*\i}{0}{\shadecolor}{1}{1}{1}{}{1}	
		}
		
		\tile{2}{2}{white}{0}{0}{0}{1}{1}
		\tile{2}{4}{\shadecolor}{1}{1}{1}{0}{1}
		\tile{2}{6}{white}{0}{0}{0}{1}{1}
		\tile{2}{8}{\shadecolor}{1}{1}{1}{0}{1}
		\tile{2}{10}{white}{0}{0}{0}{1}{1}
		%\tile{2}{12}{\shadecolor}{1}{1}{1}{0}{1}
		%\tile{2}{14}{white}{0}{0}{0}{1}{1}
		
		\tile{4}{2}{\shadecolor}{1}{1}{1}{1}{0}
		\tile{6}{2}{white}{0}{0}{0}{1}{1}
		\tile{8}{2}{\shadecolor}{1}{1}{1}{1}{0}
		\tile{10}{2}{white}{0}{0}{0}{1}{1}
		\tile{12}{2}{\shadecolor}{1}{1}{1}{1}{0}
		%\tile{14}{2}{white}{0}{0}{0}{1}{1}
		
		\tile{4}{4}{white}{0}{0}{0}{1}{1}
		\tile{6}{4}{white}{0}{0}{0}{0}{0}
		\tile{4}{6}{white}{0}{0}{0}{0}{0}
		\tile{6}{6}{white}{0}{0}{0}{0}{0}
		
		\tile{8}{4}{\shadecolor}{1}{1}{1}{1}{0}
		\tile{8}{6}{\shadecolor}{1}{1}{1}{1}{0}
		
		\tile{4}{8}{\shadecolor}{1}{1}{1}{0}{1}
		%\tile{6}{8}{\shadecolor}{1}{1}{1}{0}{1}
		
		\tile{10}{4}{\shadecolor}{1}{1}{1}{0}{1}
		
		%\tile{4}{10}{\shadecolor}{1}{1}{1}{1}{0}
		
		\tile{10}{6}{white}{0}{0}{0}{1}{1}
		
		\tile{12}{4}{white}{0}{0}{0}{1}{1}
		\tile{12}{6}{white}{0}{0}{0}{0}{0}

		 \draw[rounded corners] (4.2,11.7) -- (4.2,12.5) -- (14.2, 12.5) -- (14.2,9.2) 
		 -- (6.3, 9.2) -- (6.3,10.2) -- (4.2,10.2) -- (4.2,11.7) {};
		  
		 \tile{4.35}{10.35}{white}{0}{0}{0}{0}{0}
		 
		 \node at (10, 12) {{\footnotesize This tile can't bind}};
		 \node at (10, 11.25) {{\footnotesize  here since its inter-}};
		 \node at (10.2,10.45) {{\footnotesize action strength with}};
		 \node at (10.35,9.68)  {{\footnotesize the assembly is $1\!<\!\tau$.}};

	\end{tikzpicture}
	\hspace{1em}
}
\caption{A rectilinear TAS. Glue strengths are indicated by dots (strength 0), thin lines (strength 1), or thick lines (strength 2). A tile's label is shown at its center, and tiles with the same label have the same background shade.}
\label{fig:aTAM}
\end{figure}
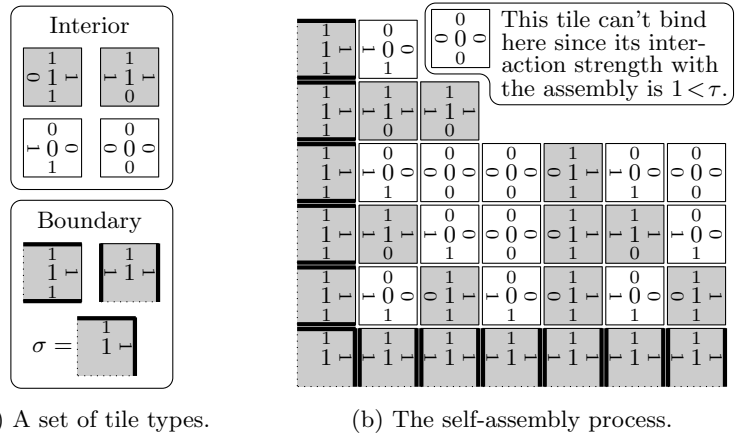

A {\em tile assembly system} (TAS) in the aTAM is a triple $(T,\sigma,\tau)$ where $T$ is a finite set of tile types, $\sigma \in T$ is the type of the {\em seed} tile, and the {\em temperature} $\tau$ is some positive integer which, in this paper, is always 2.
There is an unbounded number of tiles of each type in $T$ which self-assemble into an aggregate {\em assembly}, represented as a mapping from $\Z^2$ to $T\,\dot{\cup}\,\{\bot\}$, by tiles binding, one at a time, to an initial assembly that consists of a single tile of type $\sigma$ placed at the origin.
Whether a tile can bind to an assembly at a position $(x,y)$ is determined by how it interacts with the tiles at adjacent locations in the assembly. 
Two tiles {\em interact} precisely when they are located next to each other, and have the same glue (in both color and strength) on their abutting side.
The {\em interaction strength} of a tile with an assembly at a position $(x,y)$ is the summed strength of the glues on sides of the tile that interact with tiles at positions adjacent to $(x,y)$ in the assembly. 
A tile can {\em bind} to the assembly at a position $(x,y)$ if its interaction strength with the assembly at $(x,y)$ is at least the temperature $\tau$.
See Fig. \ref{fig:AssemblyProcess} for an illustration of the self-assembly process for a TAS using the tile set in Fig. \ref{fig:TileSet}.
When no more tiles can bind to the assembly it is said to be a {\em final assembly} of $\T$.
All TASs considered in this paper result in a unique final assembly (which is allowed to be infinite) in the aTAM.\footnote{It is a straightforward exercise to show that all of the TASs considered here result in a unique final assembly by using the local determinism technique given in \cite{soloveichik2007complexity}.}
We write $\T[x,y]$ for the tile at position $(x,y)$ in the unique final assembly of the TAS $\T$, or $\bot$ if no such tile exists.

A TAS $\T=(T,\sigma,2)$ is {\em rectilinear} if there is a partition $\{B, I\}$ of $T$ 
where tiles of types in $B$ are used for the boundary of the assembly, and tiles of types in $I$ have only strength 1 glues and are used for the interior of the assembly. A rectilinear TAS has the property that, for every $(x,y) \in \Z^2$ where $x,y\!>\!0$, the tile $\T[x,y]$ binds to the assembly by interacting with the tiles $\T[x-1,y]$ and $T[x,y-1]$. This forces the assembly to grow in the northeast direction. %\footnote{Alternatively, the tile $\T[x,y]$ can bind to the assembly by interacting with the tiles $\T[x+1,y]$ and $\T[x,y-1]$, forcing the assembly to grow in the northwest direction.} %This is the approach followed in many of the works referenced in this paper.}
For a rectiliner TAS $\T$, we denote by $L_\T(x,y)$ the value of $\tlabel(\T[x,y])$, we denote by $H_\T(x,y)$ the color of the glue shared by $\T[x-1,y]$ and $\T[x,y]$ on their abutting side, and we denote by $V_\T(x,y)$ the color of the glue shared by the tiles $\T[x,y-1]$ and $\T[x,y]$ on their abutting side. 
When $\T$ is clear, we write $L(x,y)$, $H(x,y)$ and $V(x,y)$ for $L_\T(x,y)$, $H_\T(x,y)$ and $V_\T(x,y)$, respectively.
For example, the TAS illustrated in Fig. \ref{fig:aTAM} is a rectilinear TAS where $V(x,y)=L(x,y-1)$, $H(x,y)=L(x-1,y)$, and $L(x,y)= (V(x,y) + H(x,y))\!\!\!\mod 2$.

We are interested in the pattern produced by a rectilinear TAS  $\T=(T,\sigma,2)$. A {\em pattern} of some finite alphabet $\L$ is a mapping from $\Z^2$ to $\L\,\dot{\cup}\,\{\bot\}$. The {\em pattern produced by a TAS $\T$}, denoted by $P(\T)$, is the pattern of $\cup_{t \in T} \tlabel(t)$ that maps each $(x,y) \in \Z^2$ to $\tlabel(\T[x,y])$, or $\bot$ when $\T[x,y]=\bot$.
%We write $L_\T(x,y)$, or just $L(x,y)$ when $\T$ is clear, for $\tlabel(\T[x,y])$, or a special symbol $\bot$ if $\T[x,y]$ is empty.
%The {\em pattern produced by a TAS $\T$}, denoted by $P(\T)$, is the mapping from $\Z^2$ to $\L \cup \{\bot\}$ that assigns to each location $(x,y) \in \Z^2$, the value of $L(x,y)$.  
%We say that a pattern $P$ self-assembles in $\T$ if $P=P(\T)$, and that 
We say that a pattern $P$ {\em self-assembles} if there is a TAS $\T$ such that $P=P(\T)$.
For example, the TAS of Fig. \ref{fig:aTAM} produces the Sierpinski triangle pattern (see Fig. \ref{fig:SierpinskiTriangle}). 
Several complex patterns have been shown to self-assemble in the aTAM \cite{rothemund2004algorithmic,barish2005two,barish2009information,fujibayashi2008toward,kautz2009self}.

\begin{figure}[!b]
\centering
\subfloat[][Initial mismatch error.]{
	\hspace{1em}
	\begin{tikzpicture}[scale=.41]
		\seed{0}{0}{\shadecolor}{$1$}{1}{1}{}{}
		
		\foreach \i in {1,...,3}
		{
  			 \vtile{0}{2*\i}{\shadecolor}{$1$}{1}{1}{1}{}	
			 \htile{2*\i}{0}{\shadecolor}{$1$}{1}{1}{1}{}	
		}
		
		\tile{2}{2}{white}{0}{0}{0}{1}{1}
		\tile{2}{4}{\shadecolor}{1}{1}{1}{0}{1}
		\tile{2}{6}{white}{0}{0}{0}{1}{1}
		
		\tile{4}{2}{\shadecolor}{1}{1}{1}{1}{0}
		\tile{6}{2}{white}{0}{0}{0}{1}{1}
		
		\tile{4}{4}{\shadecolor}{1}{1}{1}{0}{1}
		
		\draw[red,ultra thick] (5,4) ellipse (3.5mm and 7mm);
	\end{tikzpicture}
	\hspace{1em}
}
\hspace{2em}
\subfloat[][Permanent mismatch error.]{
	\hspace{1em}
	\begin{tikzpicture}[scale=.41]
		\seed{0}{0}{\shadecolor}{$1$}{1}{1}{}{}
		
		\foreach \i in {1,...,3}
		{
  			 \vtile{0}{2*\i}{\shadecolor}{$1$}{1}{1}{1}{}	
			 \htile{2*\i}{0}{\shadecolor}{$1$}{1}{1}{1}{}	
		}
		
		\tile{2}{2}{white}{0}{0}{0}{1}{1}
		\tile{2}{4}{\shadecolor}{1}{1}{1}{0}{1}
		\tile{2}{6}{white}{0}{0}{0}{1}{1}
		
		\tile{4}{2}{\shadecolor}{1}{1}{1}{1}{0}
		\tile{6}{2}{white}{0}{0}{0}{1}{1}	
		
		\tile{4}{4}{\shadecolor}{1}{1}{1}{0}{1}
		
		\draw[red,ultra thick] (5,4) ellipse (3.5mm and 7mm);
		
		\tile{4}{6}{\shadecolor}{1}{1}{1}{1}{0}
		\tile{6}{4}{\shadecolor}{1}{1}{1}{0}{1}
		\tile{6}{6}{white}{0}{0}{0}{1}{1}
	\end{tikzpicture}
	\hspace{1em}
}
\caption{Mismatch errors.}% (a) A tile binding to the assembly with interaction strength less than $\tau=2$. (b) A mismatch error ``locked'' into the final assembly.}%, and the propogation of the error to further incorrect tiles attaching.}
\label{fig:MismatchErrors}
\end{figure}
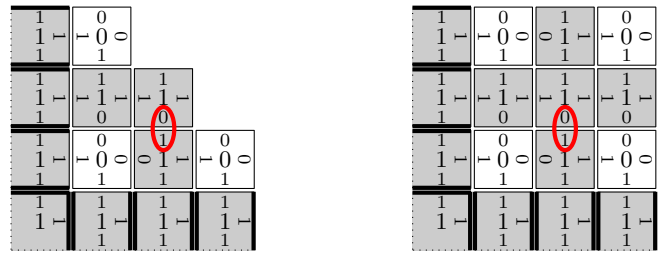

Although the aTAM is an error-free model of self-assembly, a limitation encountered in molecular implementations of rectilinear TASs \cite{rothemund2004algorithmic,winfree2003proofreading} is that sometimes a molecule will initially bind to an assembly with interaction strength less than $\tau$, but before it can detach, other molecules bind to the assembly in such a way that the initially weakly bound molecule becomes permanently ``locked'' into the assembly \cite{rothemund2004algorithmic,winfree2003proofreading}. 
We say that there is a {\em mismatch error} between two adjacent tiles in an assembly to mean that they do not interact, i.e., they have different glues on their abutting sides. 
See Fig. \ref{fig:MismatchErrors} for an illustration of a mismatch error. 
In implementations of rectiliner TASs using DNA double-crossover molecules, the intrinsic rate of mismatch errors $\epsilon$ is observed to range from 1\% to 10\% \cite{rothemund2004algorithmic,fujibayashi2008toward,barish2009information}. 
Not only do mismatch errors allow tiles to bind at incorrect locations in an assembly, but they propagate in the sense that an initial mismatch error invites more tiles to bind at incorrect locations even without a further mismatch error, thereby preventing the reliable self-assembly of large target structures with minimal or no defects.
Although the physical conditions of the system can be optimized to reduce $\epsilon$, doing so slows the rate at which tiles bind to the assembly to prohibitive levels for most applications.

The approach to reducing mismatch errors considered here is to design  {\em error-resilient} TASs that reduce the rate of mismatch errors in their final assembly without requiring a reduction in the intrinsic error rate $\epsilon$ of the system. 
An {\em error-resilient tiling scheme} transforms a TAS $\T$ into an error-resilient TAS $\R$ such that $P(\R)=P(\T)$.  
Such schemes work by redundantly encoding information in the tiles in such a way that mismatch errors become prone to detach \cite{winfree2003proofreading,chen2004error,reif2006compact,sahu2010capabilities}. 
In \cite{schulman2015inc}, it is shown that increasing this redundancy can lead to an exponential decrease in the rate of mismatch errors in the final assembly. 
However, several of these error-resilient tiling schemes \cite{winfree2003proofreading,chen2004error} replace each tile in the original TAS by a $k \times k$ block of tiles in the error-resilient TAS. This increases the number of tile types (and tiles) used by a factor of $k^2$ in the general case, and also increases the scale of the pattern produced by the error-resilient system by a factor of $k$ which is undesirable for many applications. An error-resilient tiling scheme is {\em compact} if it results in a TAS that does not increase the scale of the pattern produced. 

Compact error-resilient tiling schemes were first introduced in \cite{reif2006compact,sahu2010capabilities} where they were used to design error-resilient TASs that produce Boolean array patterns.
%In \cite{reif2006compact,sahu2010capabilities}, {\em compact error-resilient tiling schemes} are introduced which uniformly transform a Boolean array TAS $\T$ in to an error-resilient TAS $\B$ such that $P(\B)=P(\T)$, but the rate of mismatch errors in the final assembly of $\B$ is $\epsilon^2$.
A {\em Boolean array} TAS is a rectilinear TAS where the tiles are labeled by bits, and there are two binary Boolean operators $op1$ and $op2$ so that $L(x,y)=V(x,y+1)=V(x,y)\ op1\ H(x,y)$, and $H(x+1,y)=V(x,y)\ op2\ H(x,y)$. Initial bits are given for $L(x,y)$, $H(x,y)$ and $V(x,y)$ when $x=0$ or $y=0$. There are 256 such Boolean array TASs, including those that produce the binary counter and the Sierpinski triangle patterns. Each such TAS uses four distinct tile types for the interior of the assembly. The error-resilient TAS produced by the compact error-resilient tiling scheme of \cite{sahu2010capabilities} reduces the rate of mismatch errors to $\epsilon^2$, but uses up to four times as many tile types as the original error-prone TAS in the general case. An additional scheme is given in \cite{sahu2010capabilities} for the case when $op1$ and $op2$ are pairwise input sensitive in which the error-rate is reduced to $\epsilon^3$. Specific error-resilient TASs for the binary counter and the Sierpinski triangle patterns are given in \cite{reif2006compact}. 

In this paper, we introduce a compact error resilient tiling scheme for recursively defined patterns. A pattern $P$ is {\em recursively defined} if, for each $(x,y) \in \Z^2$, $P(x,y)$ is a finite function of all of the other values of $P$ in some fixed-sized rectangle of $\Z^2$ whose upper right corner is at $(x,y)$. The self-assembly of recursively defined patterns was introduced in \cite{kautz2009self}. Such patterns include, as with Boolean arrays, the Sierpinski triangle. However, since the function used in a recursively defined pattern is not required to be a Boolean operator, the recursively defined patterns also include the Sierpinski carpet and the other numerically self-similar fractal patterns, as well as an infinite number of other patterns. 
A uniform construction is given in \cite{kautz2009self} that takes a recursively defined pattern $P$, and creates a rectilinear TAS $\T$ such that $P(\T)=P$.  
Our main result in this paper is a compact error-resilient tiling scheme that transforms a TAS $\T$ created by the construction of \cite{kautz2009self} into an error-resilient TAS $\R$ such that $P(\R)=P(\T)$. We show that the rate of mismatch errors in the error-resilient TAS created by our construction is reduced to $\epsilon^2$ by using the independent error model introduced in \cite{sahu2010capabilities}. Surprisingly, $\R$ uses the same number of tile types as $\T$. To our knowledge, this is the first example of an error-resilient tiling scheme that does not increase the number of tile types used.

The rest of this paper proceeds as follows: In Section \ref{sec:Preliminaries}, we introduce some general notation and terminology used in the rest of the paper, including some operations on tuples used to simplify several of the notations in our constructions and proofs. In Section \ref{sec:TAM}, we introduce some addtional notations with regards to the aTAM and briefly introduce the kinetic Tile Assembly Model (kTAM). In Section \ref{sec:RecursivelyDefinedPatterns}, we review the recursively defined patterns introduced in \cite{kautz2009self} along with the construction given to create a TAS $\T$ in which a given recursively defined pattern self-assembles. 
In Section \ref{sec:Results}, we give our main result: a compact error-resilient tiling scheme that transforms a recursively defined TAS created by the construction of \cite{kautz2009self} into an error-resilient TAS with the same number of tile types, and producing the same pattern, as the original error-prone TAS, but reduces the rate of mismatch errors to $\epsilon^2$.  In Section \ref{sec:Simulations}, we summarize the results of simulating our error-resilient TASs using the Xgrow software on a few recursively defined test patterns to further verify our results. We conclude in Section \ref{sec:Discussion} with a discussion of our results and some ideas for further research.
%
%% Preliminaries
%!TEX root = ../Main.tex
\section{Preliminaries}
\label{sec:Preliminaries}

% General

%We write $\emptystring$ for the empty string. 
%In this section, we introduce some additional notations and terminology that will be used for the rest of the paper.

We work in the discrete Euclidean plane $\Z^{2}$. We write $U_2$ for the set of all unit vectors in $\Z^{2}$, and we refer to the elements of $U_2$ by by the cardinal directions: $\North=(0,1)$, $\East=(1,0)$, $\South=(0,-1)$, and $\West=(-1,0)$. The {\em neighborhood} of a position $(x,y)$ in $\Z^2$, denoted by $N(x,y)$, is the set of positions that differ from $(x,y)$ by at most 1 in each direction, i.e., the set of all $(x',y') \in \Z^2$ such that $|x'-x| \le 1$ and $|y'-y|\le 1$. 
We write $\Sigma^*$ for the set of all strings over some finite alphabet $\Sigma$. 
For a positive integer $n$, $[n] = \{1, 2, \ldots n\}$.
For an $n$-tuple $t$ and $1 \le i \le n$, we write $t_i$ for the $i$th element of $t$. % If $t$ is an $n$-tuple and $i > n$, then $t_i$ is undefined.
We also write $\first(t)$ for $t_1$ and $\second(t)$ for $t_2$.
When $t$ is a tuple-of-tuples, we write $t_{i,j}$ for the $j$th element of the $i$th element of $t$. 
We define the following operations on tuples in order to simplify some notation in the constructions and proofs of this paper. 
\begin{itemize}\itemsep .5em
\item  The {\em concatenation operation}:  $t \cdot u = (t_1, \ldots, t_m, u_1, \ldots, u_n)$ where $t$ is an $m$-tuple and $u$ is an $n$-tuple. When $t$ is a tuple-of-tuples, $\Pi\, t = t_1 \cdot \ldots \cdot t_m$. 
\item The {\em shift-insert operation}:  $t \mapsfrom e = (t_2, \ldots t_n, e)$ where $t$ is an $n$-tuple and $e$ is any element (possibly a tuple).
\item The {\em deep-shift-insert} operation $t \dshift u = (t_1 \mapsfrom u_1, \ldots, t_n \mapsfrom u_n)$ where $t$ is an $n$-tuple-of-tuples, and $u$ is an $n$-tuple.
\end{itemize}
Unless the order of evaluation is specified with brackets, all of the above operations on tuples are evaluated in left-to-right order so that $\Pi\,t \cdot u = t_1 \cdot t_2 \cdot \ldots \cdot t_{n} \cdot u$, $t \mapsfrom e \mapsfrom f = (t_3, \ldots t_n, e, f)$, and $t \dshift u \dshift v = (t_1 \mapsfrom u_1 \mapsfrom v_1, \ldots, t_n \mapsfrom u_n \mapsfrom v_n)$. 

% Tile Assembly models
%!TEX root = ../Main.tex

% From Reif first paper
\subsection{The Tile Assembly Model}
\label{sec:TAM}

In this paper we use both the abstract and kinetic versions of the Tile Assembly Model, the aTAM and the kTAM. An overview of the aTAM was given in the introduction, but we will use a few additional notations: for a tile $t$ and a side $\vec{u} \in U_2$, we write  $\tstrength_t(\vec{u})$ and $\tcolor_t(\vec{u})$, for the strength and color, respectively, of the glue on side $\vec{u}$ of $t$.
The {\em kinetic Tile Assembly Model} (kTAM) is an augmentation of the aTAM that uses the physical conditions of the system as parameters so that mismatch errors in the assembly process can be modeled. 
The kTAM models the rate at which tiles bind and detach from a growing assembly as a function of two unitless free energies $G_{se}$ and $G_{mc}$. 
The value of $G_{se}$ controls the number of base pair bonds that must be broken for a tile to detach from the assembly.
The value of $G_{mc}$ measures the concentration of tiles in the system and controls the forward rate of the assembly process. 
It is suggested in \cite{winfree2003proofreading} that a value for $G_{mc}$ slightly less than twice the value of $G_{se}$ provides an optimal intrinsic error rate for a TAS with a temperature of $\tau=2$.
We refer the reader to \cite{winfree2003proofreading,reif2006compact} for a more complete overview of the kTAM.

% Recursively defined patterns
%!TEX root = ../Main.tex
\subsection{Recursively Defined Patterns}\label{sec:RecursivelyDefinedPatterns}

Here, we introduce recursively defined patterns, and note that our recursively defined patterns are equivalent to the recursively defined matrices of \cite{kautz2009self} where it was shown that all recursively defined matrices self-assemble in the aTAM. 
Since we use some different notations and terminology, we reproduce the relevant definitions and results of \cite{kautz2009self} for coherence. 
Additionally, we remark that although equivalent, our definition of recursively defined patterns is more refined in the sense that the entries in a recursively defined pattern are allowed to depend on the entries in a $w \times h$ block of the pattern, where as in \cite{kautz2009self}, each entry in a recursively defined matrix depends on the entries in an $n \times n$ block of the matrix.

We use the following tuples to specify particular parts of a pattern $P$ defined relative to a particular location $(x,y) \in \Z^2$:
\begin{align*}
\row_{P}^{n}(x,y) &= (P(x-n+1,y), P(x-n+2,y), \ldots, P(x,y))\\
\block_{P}^{w,h}(x,y) &= (\row_{P,w}(x,y-h+1), \row_{P,w}(x,y-h+2), \ldots, \row_{P,w}(x,y))
\end{align*}
where $n$, $w$, and $h$ are positive integers.
Intuitively, 
$\row_{P}^{n}(x,y)$ describes the values of $P$ at each of the $n$ positions directly to the left of and including position $(x,y)$, and
$\block_{P}^{w,h}(x,y)$ describes the values of $P$ at each position of the $w \times h$ rectangle of $Z^2$ whose upper right corner is at position $(x,y)$.

For integers $w,h \ge 2$, we say that a pattern $P$ of $\L$\ is {\em $(w,h)$-recursively defined} to mean that there is a function $f : (\L\,\dot{\cup}\,\{\bot\})^{\,wh-1} \rightarrow \L $ such that for all $(x,y) \in \Z^2$, 
\begin{align*}
P(x,y) = f \left(\Pi\, \block_{P}^{w,h-1}(x,y-1) \cdot \row_{P}^{w-1}(x-1,y)\,\right)
\end{align*}
when $x,y \ge 0$, and $\bot$\ otherwise.
Note that every $(w,h)$-recursively defined pattern is also both a $(w+1,h)$ and $(w,h+1)$-recursively defined pattern. A pattern is {\em recursively defined} if it is $(w,h)$-recursively defined for some $w$ and $h$.

A main result of \cite{kautz2009self} is that every recursively defined pattern self-assembles.
\begin{theorem}[\cite{kautz2009self}]\label{thm:RecursivePatternsSelfAssemble}
Let $P$ be a $(w,h)$-recursively defined pattern on $\L$. There exists a TAS $\T=(T,\sigma,2)$ where $|T| \le (|\L|+1)^{mn-1}$ and $P(\T)=P$.
\end{theorem}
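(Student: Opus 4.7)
The plan is to construct a rectilinear TAS $\T=(T,\sigma,2)$ in which each interior tile at position $(x,y)$ carries, encoded into its south and west glues, exactly the $wh-1$ values of $P$ that the defining function $f$ consumes. Specifically I will set
\begin{align*}
V(x,y) &= \block_{P}^{w,h-1}(x,y-1), & H(x,y) &= \row_{P}^{w-1}(x-1,y),
\end{align*}
where each tuple is viewed as a string over a suitable finite alphabet built from $\L\,\dot{\cup}\,\{\bot\}$, and both glues have strength $1$. I then assign $L(x,y) = f\bigl(\Pi\,V(x,y)\cdot H(x,y)\bigr)$, which by definition of $(w,h)$-recursive is exactly $P(x,y)$.

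Next I will verify that each interior tile's outgoing (north and east) glues are functions of its incoming pair and its label. Using the shift-insert notation of Section~\ref{sec:Preliminaries},
\begin{align*}
H(x+1,y) &= H(x,y)\shift L(x,y), & V(x,y+1) &= V(x,y)\shift \bigl(H(x,y)\cdot (L(x,y))\bigr),
\end{align*}
which unwinds to $\row_{P}^{w-1}(x,y)$ and $\block_{P}^{w,h-1}(x,y)$ respectively, exactly the values required by the neighbors to the east and north. Consequently each interior tile type is completely determined by its (south,\,west) incoming pair: the south glue ranges over at most $(|\L|+1)^{w(h-1)}$ values and the west glue over at most $(|\L|+1)^{w-1}$ values, giving $|T|\le (|\L|+1)^{wh-1}$, which matches the stated bound (boundary tiles contribute only lower-order terms).

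To finish I will add boundary tile types along $y=0$ and $x=0$ that pre-load the out-of-range positions with $\bot$ in the $\block$ and $\row$ encodings, together with a seed $\sigma$ at the origin, using strength-$2$ glues along the boundary directions in the standard rectilinear style of Fig.~\ref{fig:TileSet} so that assembly grows northeast. A straightforward induction on $x+y$ then shows that a unique final assembly exists and that its label at $(x,y)$ equals $P(x,y)$, so $P(\T)=P$.

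The main obstacle is not any single deep step but the bookkeeping around the encoding: one must check that the block and row stored in the south and west glues shift consistently into the outgoing glues of each tile, and that these shifts agree at every adjacent position so that assembly never stalls or branches. This is exactly the role of the $\shift$ and $\dshift$ operations of Section~\ref{sec:Preliminaries}; once the encoding and update rules above are pinned down, the remainder of the proof is a routine verification and a counting argument.
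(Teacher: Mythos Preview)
Your proposal is correct and is essentially identical to the paper's own Construction~\ref{con:KautzLathrop}: the paper also sets $\tcolor_t(\South)=block=\block_P^{w,h-1}(x,y-1)$, $\tcolor_t(\West)=r^h=\row_P^{w-1}(x-1,y)$, and defines the outgoing glues by exactly your shift-insert updates $\tcolor_t(\East)=r^h\shift\tlabel(t)$ and $\tcolor_t(\North)=block\shift[r^h\cdot(\tlabel(t))]$, with boundary tiles handled by allowing $\bot$ entries and adjusting strengths accordingly. The only minor remark is that in the paper the boundary tiles are already subsumed in the $(|\L|+1)^{wh-1}$ count (the ``$+1$'' is precisely for $\bot$), so your parenthetical about lower-order terms is unnecessary.
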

Theorem \ref{thm:RecursivePatternsSelfAssemble} is proven in \cite{kautz2009self}  using a uniform construction that creates a TAS $\T = (T,\sigma,2)$ in which a given $(w,h)$-recursively defined pattern $P$ self-assembles. We review this construction here using the notation and terminology given above:

\begin{construction}[\cite{kautz2009self}]\label{con:KautzLathrop}\normalfont
For each $r^1, r^2, \ldots, r^{h-1} \in \L_\bot^w$ and $r^h \in \L_\bot^{w-1}$ such that both of the following conditions hold: 
	\begin{enumerate}
	\item[] $\forall i \in [h], \forall j,k \in [w]$, $(r^i_j=\bot$ and $r^i_k\ne\bot$) implies $j < k$\,, and 
	\item[] $\forall i,j \in [h], \forall k \in [w]$, $(r^i_k=\bot$ and $r^j_k \ne \bot)$ implies $i < j$\,, 
	\end{enumerate}
let $block=(r^1, \ldots, r^{h-1})$, and add a tile type $t$ to $T$ where $\tlabel(t)= f_P( \Pi block \cdot r^h)$, the glue color on each side of $t$ is defined by:
\begin{align*}
&\tcolor_t(\West) = r^h\\
&\tcolor_t(\South) = block\\
&\tcolor_t(\East) =  r^h \mapsfrom \tlabel(t)\\
&\tcolor_t(\North) = block \mapsfrom [r^h \cdot (\tlabel(t))]
\end{align*}
and the glue strength on each side of $t$ is 1 except in the following cases:
\begin{enumerate}\itemsep .9em
	\item[] {\em Case 1:\ }\ $r^{h-1}, r^h \in \{\bot\}^*$. In this case $\tstrength(\North)=2$, $\tstrength(\East)=2$, $\tstrength(\South)=0$, $\tstrength(\West)=0$, and $\sigma= t$.
	\item[] {\em Case 2:\ }\ $r^{h-1} \in \{\bot\}^*, r^{h} \not\in \{\bot\}^*$. In this case $\tstrength(\West)=2$, $\tstrength(\East)=2$, $\tstrength(\North)=1$, $\tstrength(\South)=0$ (i.e., $t$ is used to assemble the horizontal boundary),
	\item[] {\em Case 3:\ }\ $r^{h-1} \not\in \{\bot\}^*, r^{h} \in \{\bot\}^*$. In this case $\tstrength(\North)=2$, $\tstrength(\South)=2$, $\tstrength(\East)=1$, $\tstrength(\West)=0$ (i.e., $t$ is used to assemble the vertical boundary).
	\end{enumerate}
\end{construction}

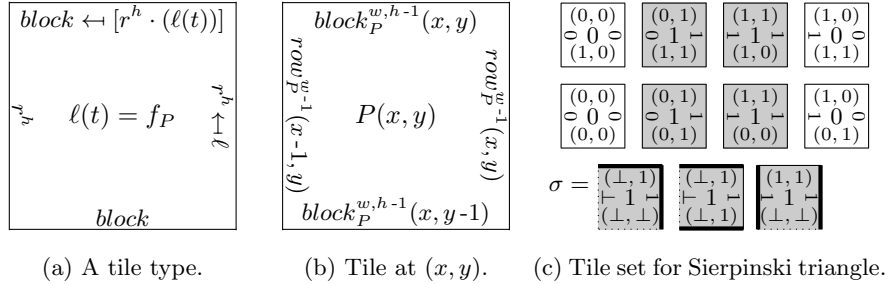
\begin{figure}[!t]
\centering

\subfloat[][A tile type.]{
	\begin{tikzpicture}[scale=1.6]
		\MEDIUMtile{0}{0}{white}{$\tlabel(t)=f_P$}
		{$ block \mapsfrom [r^h \cdot (\tlabel(t))]$}
		{$r^h \shift \ell$}
		{$\block$}
		{$r^h$}
	\end{tikzpicture}
}
\subfloat[][Tile at $(x,y)$.]{
	\begin{tikzpicture}[scale=1.6]
		\MEDIUMtile{0}{0}{white}{$P(x,y)$}
		{$\block_{P}^{w,h\text{\,-}1}(x,y)$}
		{$\row_{P}^{w\text{\,-}1}(x,y)$}
		{$\block_{P}^{w,h\text{\,-}1}(x,y\,\text{-}1)$}
		{$\row_{P}^{w\text{\,-}1}(x\,\text{-}1,y)$}
	\end{tikzpicture}
}
\subfloat[][Tile set for Sierpinski triangle.]{\label{fig:RecursiveTriangleTileSet}
	\begin{tikzpicture}[scale=.45]
	
		\tile{0}{4.8}{white}{0}{$(0,0)$}{$0$}{$(1,0)$}{$0$}
		 \tile{2.4}{4.8}{\shadecolor}{1}{$(0,1)$}{$1$}{$(1,1)$}{$0$}
		 \tile{4.8}{4.8}{\shadecolor}{1}{$(1,1)$}{$1$}{$(1,0)$}{$1$}	
		 \tile{7.2}{4.8}{white}{0}{$(1,0)$}{$0$}{$(1,1)$}{$1$}

		\tile{0}{2.4}{white}{0}{$(0,0)$}{$0$}{$(0,0)$}{$0$}
		 \tile{2.4}{2.4}{\shadecolor}{1}{$(0,1)$}{$1$}{$(0,1)$}{$0$}
		 \tile{4.8}{2.4}{\shadecolor}{1}{$(1,1)$}{$1$}{$(0,0)$}{$1$}	
		 \tile{7.2}{2.4}{white}{0}{$(1,0)$}{$0$}{$(0,1)$}{$1$}
		 
		 \node at (.3,1.35) {$\sigma=$};
		 \seed{1.1}{0}{\shadecolor}{1}{$(\bot,1)$}{$1$}{$(\bot,\bot)$}{$\bot$}
		 
		 \vtile{3.5}{0}{\shadecolor}{1}{$(\bot,1)$}{$1$}{$(\bot,1)$}{$\bot$}	 
		 \htile{5.8}{0}{\shadecolor}{1}{$(1,1)$}{$1$}{$(\bot,\bot)$}{$1$}
		 
		 \node at (0,-.15) {};
	\end{tikzpicture}
}
\caption{Construction \ref{con:KautzLathrop}. (a) shows the design of the glue colors on each side of a tile. (b) shows the information encoded into the sides of the tile that binds at position $(x,y)$ in the assembly. (c) shows the tile set created for the Sierpinski triangle. To save space, all 1-tuples are shown in (c) without parantheses.}
\label{fig:RecursiveTileSet}
\end{figure}

Intuitively, the TAS $\T$ created by Construction \ref{con:KautzLathrop} is a rectilinear TAS where, for each $(x,y) \in \Z^2$, $V(x,y) = \block_{P}^{w,h-1}(x,y-1)$, $H(x,y)=\row_{P}^{w-1}(x-1,y)$, and $L(x,y) = f_P( \Pi\,V(x,y) \cdot H(x,y))$. For example, the TAS created by Construction \ref{con:KautzLathrop} for the Sierpinski triangle has $V(x,y)=(L(x-1,y-1), L(x,y-1))$, $H(x,y)=(L(x-1,y))$, and $L(x,y) = L(x,y-1) + L(x-1,y)\!\!\!\mod 2$. See Fig. \ref{fig:RecursiveTileSet} for an illustration.

\subsubsection{A note on optimizing size of a tile set created by Construction \ref{con:KautzLathrop}.}

For $(2,2)$-recursively defined patterns $P$ in which the value of $P(x,y)$ does not depend on $P(x-1,y-1)$, the size of the tile set in a TAS produced by Construction \ref{con:KautzLathrop} for $P$ is not optimal. 
For example, the tile set for the Sierpinski triangle pattern given in Fig. \ref{fig:RecursiveTriangleTileSet} uses 3 tile types for the boundary and 8 tile types for the interior. However, the tile set for the Sierpinski triangle given in Fig. \ref{fig:aTAM} uses 3 tile types for the boundary and only 4 tile types for the interior. 
Another construction is given in \cite{kautz2009self} that creates a smaller tile set for such patterns. We stress here, that the error-resilient tiling scheme we introduce in Section \ref{sec:Results} requires as input the tile set created exactly as specified by Construction \ref{con:KautzLathrop}.

%
%% Main Results
%!TEX root = ../Main.tex
\section{Main Results}
\label{sec:Results}

In this section, we give our main result: a compact error-resilient tiling scheme that transforms a recursively defined TAS created by Construction \ref{con:KautzLathrop} given in \cite{kautz2009self} (and reviewed in Section \ref{sec:RecursivelyDefinedPatterns}) for a recursively defined pattern $P$, into an error-resilient TAS that produces the pattern $P$ and uses the same number of tile types, but the rate of mismatch errors in the final assembly is reduced to $\epsilon^2$. We analyze the error-rate of the TAS created by our error-resilient tiling scheme using the error model introduced in \cite{sahu2010capabilities}.

\begin{theorem}[Main Theorem]\label{thm:MainTheorem}
There exists a compact error-resilient tiling scheme that, given a TAS $\T=(T,\sigma_\T,2)$ created by Construction \ref{con:KautzLathrop}, creates a TAS $\R=(R,\sigma_\R,2)$ such that $P(\R)=P(\T)$,  $|R|=|T|$, and the rate of mismatch errors in the final assembly of $\R$ is $\epsilon^2$, where $\epsilon$ is the intrinsic error rate of the system. 
\end{theorem}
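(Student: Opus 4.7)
The plan is to exploit the substantial redundancy already built into Construction~\ref{con:KautzLathrop}. At every position $(x,y)$ the west glue of the correct tile carries the entire row $\row_P^{w-1}(x-1,y)$ and the south glue carries the block $\block_P^{w,h-1}(x,y-1)$; moreover, the rightmost entry of the west row, $P(x-1,y)$, is itself determined via $f_P$ by data lying entirely inside the south glue together with the earlier entries of the west row, so the same pattern value is recoverable from two disjoint interfaces. I would therefore build $\R$ not by adding tiles but by reinterpreting and repartitioning the strengths of the existing glues of $\T$: each tuple-valued side glue of a tile $t\in T$ is split into named sub-glues (one per coordinate of its tuple), and the strengths are redistributed so that a correct neighbor contributes binding strength along two logically independent sub-glues while an incorrect neighbor can contribute along at most one. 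The bijection $t\mapsto r$ that preserves $\tlabel(t)$ and the underlying tuples $r^h$ and $\block$ then gives $|R|=|T|$ at once.

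To establish $P(\R)=P(\T)$, I would first check that in error-free aTAM dynamics the rectilinear northeast growth order inherited from Construction~\ref{con:KautzLathrop} is preserved, because the total interaction strength supplied by the correctly placed west and south neighbors at $(x,y)$ remains exactly $\tau=2$. Since the labels are left unchanged and the same recursion $f_P$ applies, the unique final assembly of $\R$ assigns the same label to every position as $\T$ does, so $P(\R)=P(\T)$ by Theorem~\ref{thm:RecursivePatternsSelfAssemble}.

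For the error bound I would adopt the independent error model of Sahu--Reif, in which each interface mismatch is an independent event with probability $\epsilon$. Fix any interior $(x,y)$ and suppose a wrong tile $r'\in R$ becomes locked in at $(x,y)$. By the strength redistribution, $r'$ can reach total interaction strength $\tau=2$ only if it simultaneously matches its west neighbor on a sub-glue encoding the pattern value $P(x-1,y)$ (or a derived quantity) and matches its south neighbor on a sub-glue that encodes the \emph{same} underlying value through the recursive redundancy. The two matches live on different interfaces and therefore, in the independent error model, correspond to two independent mismatch trials; hence the locked-in error probability at $(x,y)$ is at most $\epsilon^2$, which gives the claimed rate after summing contributions over the finitely many ways a bad tile could fit.

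The main obstacle will be rigorously verifying the ``two distinct interfaces'' property uniformly across all interior positions and across the three boundary cases of Construction~\ref{con:KautzLathrop}: when $r^{h-1}$ or $r^h$ degenerates to a $\bot$-string the natural redundancy partially collapses, and one must argue that the modified boundary tiles still act as a one-time seeding pathway whose errors cannot propagate into the interior. The combinatorial heart of the proof lies in showing that the sub-glue partition is injective enough that no wrong tile can accidentally agree on the redundant field while disagreeing only on the primary field, so that a single mismatch never supplies enough interaction strength for a wrong tile to lock in at any position reachable from the seed.
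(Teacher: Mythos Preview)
Your proposal rests on two claims that do not hold. First, the redundancy you identify in Construction~\ref{con:KautzLathrop} is not actually there: to recompute $P(x-1,y)=f_P\bigl(\Pi\,\block_P^{w,h-1}(x-1,y-1)\cdot\row_P^{w-1}(x-2,y)\bigr)$ you need the column $i=x-w$ of the block $\block_P^{w,h-1}(x-1,y-1)$, but the south glue at $(x,y)$ is $\block_P^{w,h-1}(x,y-1)$, which covers columns $x-w+1,\ldots,x$ and omits exactly that column; the ``earlier entries of the west row'' lie in row $y$, not below it, so they cannot supply the missing data either. Hence $P(x-1,y)$ is \emph{not} recoverable from the south interface plus a proper prefix of the west interface, and there is no pre-existing cross-check to exploit. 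Second, the ``strength redistribution'' you describe is not a construction in the aTAM: interior tiles in Construction~\ref{con:KautzLathrop} carry strength-$1$ glues, and integer strengths cannot be subdivided across sub-glues so that a correct neighbor contributes $2$ while a wrong one contributes at most $1$. Any such scheme either leaves the model or changes the temperature-$2$ binding condition.

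The paper's route is different in kind. Construction~\ref{con:MainConstruction} leaves all glue \emph{strengths} untouched and instead rewrites the glue \emph{colors}: the $(w-1)\times(h-1)$ sub-block $\block_P^{w-1,h-1}(x-1,y-1)$ is made to appear explicitly as the first component of \emph{both} the west and the south glue of the tile at $(x,y)$. This manufactured overlap is the source of error resilience. The analysis then proceeds in the Sahu--Reif framework not by bounding a lock-in probability directly, but by proving (Theorem~\ref{thm:RForcesOneMore}) that a single mismatch on the south or west side of $(x,y)$ forces a second mismatch somewhere in $N(x,y)$; Theorem~\ref{thm:IndependentErrorModel} then yields the $\epsilon^2$ rate. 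Your argument about ``two independent mismatch trials'' at a single position is not the mechanism that model captures.
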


Our proof of Theorem \ref{thm:MainTheorem} uses Construction \ref{con:MainConstruction}, given below, which takes as input a TAS $\T$ created by the construction of \cite{kautz2009self} (Construction \ref{con:KautzLathrop} in Section \ref{sec:RecursivelyDefinedPatterns}), and creates such a TAS $\R$. % 
We will need one additional part of a pattern $P$ relative to a particular location $(x,y) \in \Z^2$ to use in Construction \ref{con:MainConstruction}, along with the parts $\row_{P}^{n}(x,y)$ and $\block_{P}^{w,h}(x,y)$ defined in Section \ref{sec:RecursivelyDefinedPatterns}:
\begin{equation*}
\col_{P}^{n}(x,y) = (P(x,y-n+1), P(x,y-n+2), \ldots, P(x,y))
\end{equation*}
where $n$ is some positive integer.
Intuitively, 
$\col_{P}^{n}(x,y)$ describes the values of $P$ at each of the $n$ positions directly below and including position $(x,y)$.

\begin{construction}\label{con:MainConstruction}{\normalfont
Let $\T=(T,\sigma_\T,2)$ be the TAS created for a $(w,h)$-recursively defined pattern $P$ by the construction in the proof of Theorem \ref{thm:RecursivePatternsSelfAssemble}. We construct a TAS $\R=(R,\sigma_\R,2)$ as follows:
For each tile type $t \in T$, let
\begin{align*}
b &= \tcolor_t(\South)\\
row &= \tcolor_t(\West)\\
col &= (b_{1,w}, \ldots, b_{h-1,w})\\
block &=  ( (b_{1,1}, \ldots, b_{1,w-1}), (b_{2,1}, \ldots, b_{2,w-1}), \ldots, (b_{h-1,1}, \ldots, b_{h-1,w-1}) )
\end{align*}
and add a tile type $r_t$ to $R$ where $\tlabel(r_t) = \tlabel(t)$, $\tstrength_{r_t}(\vec{u}) = \tstrength_t(\vec{u})$ for each $\vec{u} \in U_2$, and the glues of $r_t$ are set in the following way:
        	\begin{align*}
	\tcolor_{r_t}(\West) &= (block, row)\\
	\tcolor_{r_t}(\South) &= (block, col)\\
	\tcolor_{r_t}(\East) &=  (block \dshift col, row \mapsfrom \tlabel(t))\\
	\tcolor_{r_t}(\North) &= (block \mapsfrom row, col \mapsfrom \tlabel(t))
	\end{align*}
	After adding a tile type $r_t$ to $R$ for each $t \in T$, we set $\sigma_\R=r_{\sigma_\T}$.
}
\end{construction}

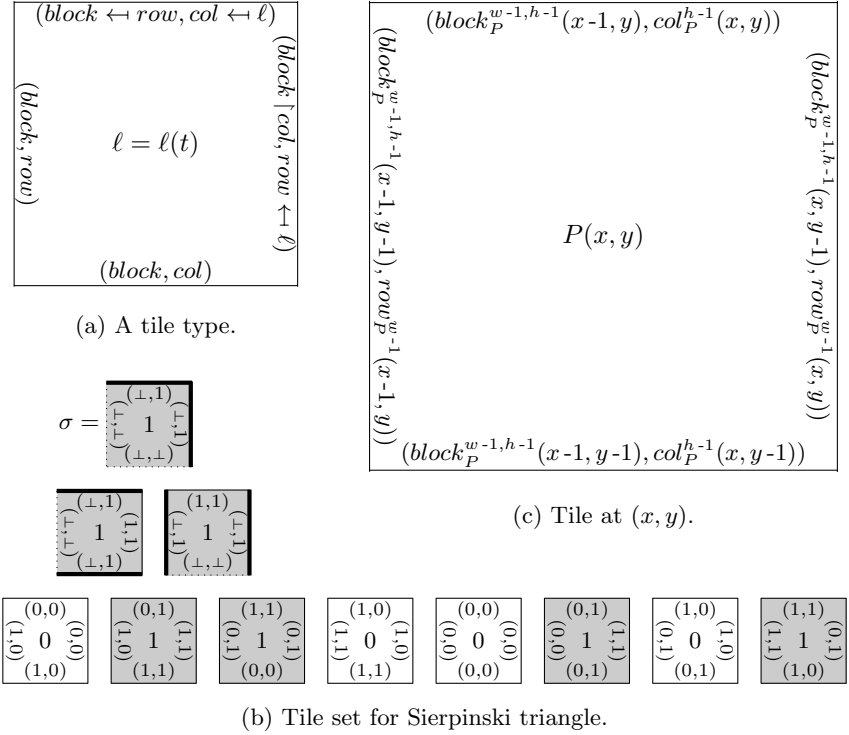
\begin{figure}[!t]

\centering

\begin{minipage}[b]{1.8in}
\centering
\subfloat[][A tile type.]{
	\begin{tikzpicture}[scale=2]
		\MEDIUMtile{0}{0}{white}{$\ell=\ell(t)$}
		{$(block \mapsfrom row, col \mapsfrom \ell)$}
		{$(block \dshift col, row \mapsfrom \ell)$}
		{$(block, col)$}
		{$(block, row)$}
	\end{tikzpicture}
}\\
\renewcommand{\thesubfigure}{b}
\subfloat[][Tile set for  Sierpinski triangle.]{
	\begin{tikzpicture}[scale=.6]

		 \node at (1.75,1.35) {$\sigma=$};
		 \seed{2.3}{.35}{\shadecolor}{1}{$(\text{\tiny$\bot$},\!1)$}{$(\text{\tiny$\bot$},\!1)$}{$(\text{\tiny$\bot$},\!\text{\tiny$\bot$})$}{$(\text{\tiny$\bot$},\!\text{\tiny$\bot$})$}
		 
		 \vtile{1.2}{-2.05}{\shadecolor}{1}{$(\text{\tiny$\bot$},\!1)$}{$(1,\!1)$}{$(\text{\tiny$\bot$},\!1)$}{$(\text{\tiny$\bot$},\!\text{\tiny$\bot$})$}	 
		 \htile{3.6}{-2.05}{\shadecolor}{1}{$(1,\!1)$}{$(\text{\tiny$\bot$},\!1)$}{$(\text{\tiny$\bot$},\!\text{\tiny$\bot$})$}{$(\text{\tiny$\bot$},\!1)$}

		 \tile{0}{-4.45}{white}{0}{$(0,\!0)$}{$(0,\!0)$}{$(1,\!0)$}{$(1,\!0)$}
		 \tile{2.4}{-4.45}{\shadecolor}{1}{$(0,\!1)$}{$(1,\!1)$}{$(1,\!1)$}{$(1,\!0)$}
		 \tile{4.8}{-4.45}{\shadecolor}{1}{$(1,\!1)$}{$(0,\!1)$}{$(0,\!0)$}{$(0,\!1)$}	
		  \tile{7.2}{-4.45}{white}{0}{$(1,\!0)$}{$(1,\!0)$}{$(1,\!1)$}{$(1,\!1)$}

		 \tile{9.6}{-4.45}{white}{0}{$(0,\!0)$}{$(0,\!0)$}{$(0,\!0)$}{$(0,\!0)$}
		 \tile{12}{-4.45}{\shadecolor}{1}{$(0,\!1)$}{$(1,\!1)$}{$(0,\!1)$}{$(0,\!0)$}
		 \tile{14.4}{-4.45}{white}{0}{$(1,\!0)$}{$(1,\!0)$}{$(0,\!1)$}{$(0,\!1)$}
		 \tile{16.8}{-4.45}{\shadecolor}{1}{$(1,\!1)$}{$(0,\!1)$}{$(1,\!0)$}{$(1,\!1)$}
;
	
	\end{tikzpicture}
}
\end{minipage}
\begin{minipage}[b]{2.8in}
\centering
\renewcommand{\thesubfigure}{c}
\subfloat[][Tile at $(x,y)$.]{
	\begin{tikzpicture}[scale=3.3]
		\LARGEtile{0}{0}{white}{$P(x,y)$}
		{$(\block_{P}^{w\text{\,-}1,h\text{\,-}1}(x\text{\,-}1,y), \col_{P}^{h\text{\,-}1}(x,y))$}
		{$(\block_{P}^{w\text{\,-}1,h\text{\,-}1}(x,y\text{\,-}1), \row_{P}^{w\text{\,-}1}(x,y))$}
		{$(\block_{P}^{w\text{\,-}1,h\text{\,-}1}(x\text{\,-}1,y\,\text{-}1), \col_{P}^{h\text{\,-}1}(x,y\text{\,-}1))$}
		{$(\block_{P}^{w\text{\,-}1,h\text{\,-}1}(x\text{\,-}1,y\,\text{-}1), \row_{P}^{w\text{\,-}1}(x\,\text{-}1,y))$}
	\end{tikzpicture}
}
\vspace{6.625em}
\end{minipage}
\caption{Construction \ref{con:MainConstruction}. (a) shows the design of the glue colors on each side of a tile. (b) shows the tile set created for the Sierpinski triangle. To save space, all 1-tuples are shown in (b) without parantheses. (c) shows the information encoded into the sides of tile that binds at position $(x,y)$ in the assembly.}
\label{fig:MainConstruction}
\end{figure}

Intuitively, the TAS $\R$ created by Construction \ref{con:MainConstruction} is a rectilinear TAS in which, for each $(x,y) \in \Z^2$, $V(x,y)$ represents the pair $(\block_{P}^{w-1,h-1}(x-1,y-1), \col_{P}^{h-1}(x,y-1))$, $H(x,y)$ represents the pair $(\row_{P}^{w-1}(x-1,y), \block_{P}^{w-1,h-1}(x-1,y-1)$, and $L(x,y) = f_P( \Pi\,\block_{P}^{w,h-1}(x,y-1) \cdot \row_{P}^{w-1}(x-1,y))$. 
Note that $\block_{P}^{w-1,h-1}(x-1,y-1)$ is encoded redundantly in both $V(x,y)$ and $H(x,y)$.
See Figure \ref{fig:MainConstruction} for an illustration.

The following lemma shows that the TAS $\R$ created by Construction \ref{con:MainConstruction} for a given TAS $\T$ created by Construction \ref{con:KautzLathrop} produces the same pattern as $\T$ and uses the same number of  tile types. 

\begin{lemma}\label{lem:SamePatternAndSize}
Let $\T=(T,\sigma_\T,2)$ be a TAS created by Construction \ref{con:KautzLathrop}, and let $\R=(R,\sigma_\R,2)$ be the TAS created by Construction \ref{con:MainConstruction} given $\T$. Then $P(\R)=P(\T)$ and $|R|=|T|$.
\end{lemma}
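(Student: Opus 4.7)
The plan is to establish both claims nearly simultaneously. For $|R|=|T|$ I will show that the map $t \mapsto r_t$ given in Construction \ref{con:MainConstruction} is injective. For $P(\R)=P(\T)$ I will prove by induction on $x+y$ that the tile placed at $(x,y)$ in $\R$ is exactly $r_{\T[x,y]}$; once this is known, equality of the produced patterns is immediate since $\tlabel(r_t)=\tlabel(t)$ by construction.

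For injectivity, the key observation is that $r_t$ carries enough redundant information to reconstruct $t$. The South glue $\tcolor_{r_t}(\South)=(block,col)$ determines $b=\tcolor_t(\South)$ by re-interleaving the last column: $b_{i,j}=block_{i,j}$ for $j<w$ and $b_{i,w}=col_i$. The West glue $\tcolor_{r_t}(\West)=(block,row)$ then supplies $row=\tcolor_t(\West)$, and $\tlabel(r_t)=\tlabel(t)$ recovers the label. By the equations of Construction \ref{con:KautzLathrop}, the North and East glues of $t$ are determined by these three pieces of data; since Construction \ref{con:MainConstruction} also preserves strengths verbatim, $r_t$ determines $t$ and hence $|R|=|T|$.

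For pattern equality, I first observe that $\R$ inherits its rectilinear boundary/interior partition from $\T$ (strengths are preserved) and that $\sigma_\R=r_{\sigma_\T}$, so the induction on $x+y$ is well-founded and the base case holds. In the inductive step it suffices to verify two glue identities at interior positions: $\tcolor_{r_{\T[x,y-1]}}(\North) = \tcolor_{r_{\T[x,y]}}(\South)$ and $\tcolor_{r_{\T[x-1,y]}}(\East) = \tcolor_{r_{\T[x,y]}}(\West)$. Unpacking both sides using $\tcolor_{\T[x,y]}(\South)=\block_P^{w,h-1}(x,y-1)$ and $\tcolor_{\T[x,y]}(\West)=\row_P^{w-1}(x-1,y)$, each reduces to showing that the two components of the compound glue equal $\block_P^{w-1,h-1}(x-1,y-1)$ together with either $\col_P^{h-1}(x,y-1)$ (for South/North) or $\row_P^{w-1}(x-1,y)$ (for East/West). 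The boundary positions along the two axes are handled analogously using the $\bot$-padded rows in Cases~2 and~3 of Construction \ref{con:KautzLathrop}.

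The main obstacle is the bookkeeping for the shift-insert operations. The deep-shift-insert $block \dshift col$ in the East glue precisely replaces each row of the $(w-1)$-wide block at $(x-1,y)$ with a one-step rightward shift that re-appends the missing $w$-th entry from $col$, thereby producing $\block_P^{w-1,h-1}(x-1,y-1)$ as seen from $(x,y)$. Dually, the ordinary shift-insert $block \mapsfrom row$ in the North glue at $(x,y-1)$ discards the stale bottom row of the block and appends the freshly completed row at height $y-1$, while $col \mapsfrom \tlabel(\T[x,y-1])$ similarly advances the column by one. Once these two sliding-window identities are checked by direct substitution into the defining equations of Construction \ref{con:KautzLathrop}, the induction closes and Lemma \ref{lem:SamePatternAndSize} follows.
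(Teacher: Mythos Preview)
Your proposal is correct and follows essentially the same approach as the paper: both establish $|R|=|T|$ by showing that $t\mapsto r_t$ is injective (reconstructing $\tcolor_t(\South)$ and $\tcolor_t(\West)$ from $\tcolor_{r_t}(\South)$ and $\tcolor_{r_t}(\West)$), and both obtain $P(\R)=P(\T)$ by arguing that $\R[x,y]=r_{\T[x,y]}$ for every $(x,y)$. Your write-up is somewhat more explicit than the paper's, carrying out the induction on $x+y$ and spelling out the sliding-window glue identities, whereas the paper characterizes $\R[x,y]$ directly and leaves that inductive verification implicit.
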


The proof of Lemma \ref{lem:SamePatternAndSize} is straightforward and given in the technical appendix. % given in the technical appendix.

%!TEX root = ../../Main.tex
\subsection{Error Analysis}
\label{sec:ErrorAnalysis}
We analyze the error rate of a TAS $\R$ created by Construction \ref{con:MainConstruction} by using the independent error model introduced in \cite{sahu2010capabilities}. In this model, $\epsilon$ represents the probability of the event that there is a mismatch error between a pair of tiles, and they stay connected in the equilibrium. It is assumed that $\epsilon$ is independent of any other mismatch errors, or lack thereof. We refer the reader to \cite{sahu2010capabilities} for more on the independent error model.

\begin{theorem}\cite{sahu2010capabilities}\label{thm:IndependentErrorModel} Let $\R$ be a rectilinear TAS, and let $(x,y) \in \Z^2$. Under the independent error model, if a mismatch error between the tiles $\R[x,y]$ and $\R[x,y-1]$ or between the tiles $\R[x,y]$ and $\R[x-1,y]$ forces $k$ further mismatch errors between tiles in $N(x,y)$, then the rate of mismatch errors in the final assembly of $\R$ is reduced to $\epsilon^{k+1}$. 
\end{theorem}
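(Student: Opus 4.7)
The plan is to exploit the structure of rectilinear TASs together with the independence assumption of the error model to bound the probability that any particular incorrect tile becomes locked into the final assembly. First, I would fix a position $(x,y) \in \Z^2$ and consider the event $E_0$ that, when a tile is placed at $(x,y)$, it mismatches its southern neighbor $\R[x,y-1]$ or its western neighbor $\R[x-1,y]$ (or both). In a rectilinear TAS the tile at $(x,y)$ binds only via the strength-1 glues on its south and west sides, so an erroneous tile binds with interaction strength at most $1 < \tau$ and must therefore remain connected long enough for further tiles to accrue in $N(x,y)$ before it can be ``locked in.'' By the independent error model, the probability that such a weakly bound mismatching tile stays connected to the assembly is exactly $\epsilon$, and this event is independent of the behavior of any other pair of tiles.

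Next, I would formalize the hypothesis ``forces $k$ further mismatch errors between tiles in $N(x,y)$.'' Intuitively, the only way the incorrect tile $\R[x,y]$ can persist is if the subsequent tiles placed at positions of $N(x,y)$ above and to the right of $(x,y)$ each bind with strength $\tau$ before $\R[x,y]$ detaches. Because $\R[x,y]$ carries incorrect glue information (propagated from the initial mismatch), any tile that uses $\R[x,y]$ as one of its binding neighbors must, by the assumed structural hypothesis, itself incur a mismatch with one of its other neighbors in $N(x,y)$. The hypothesis guarantees that $k$ distinct such neighbor-pair mismatches $E_1, \ldots, E_k$ are forced to occur (and to persist) in order for the original mismatch not to be resolved by detachment.

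Then I would invoke independence: by the independent error model the $k+1$ events $E_0, E_1, \ldots, E_k$, being defined on disjoint pairs of abutting sides in $N(x,y)$, are mutually independent, each with probability $\epsilon$. The probability that the tile placed at $(x,y)$ is a persistent mismatch in the final assembly is therefore at most
\begin{equation*}
\Pr[E_0 \cap E_1 \cap \cdots \cap E_k] \;=\; \prod_{i=0}^{k} \Pr[E_i] \;=\; \epsilon^{k+1}.
\end{equation*}
Since this bound holds at every position, the overall rate of mismatch errors in the final assembly of $\R$ is reduced to $\epsilon^{k+1}$, as claimed.

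The hard part will be pinning down precisely what ``forces'' means so that the independence argument is legitimate. One must argue that (i) every escape route — namely, the incorrect tile detaching before $k$ additional mismatches occur — has already been precluded by the hypothesis, so that persistence of $E_0$ truly entails the occurrence of all of $E_1, \ldots, E_k$; and (ii) the $k+1$ events can be indexed by genuinely disjoint abutting-side pairs in $N(x,y)$, so that the independent error model's mutual-independence clause applies without double-counting. Once these two points are cleanly stated, the product bound above follows immediately and the theorem is established.
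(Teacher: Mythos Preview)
The paper does not prove this theorem at all: it is stated with an explicit citation to \cite{sahu2010capabilities} and is used as a black box imported from the literature on the independent error model. There is therefore no ``paper's own proof'' to compare your proposal against; the authors simply invoke the result to combine it with their Theorem~\ref{thm:RForcesOneMore} and obtain Corollary~\ref{cor:RPEps2}.

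That said, your sketch is a faithful reconstruction of the intended content of the independent error model as summarized in Section~\ref{sec:ErrorAnalysis}: fix a position, note that a persistent mismatch at $(x,y)$ requires the forced additional mismatches in $N(x,y)$ also to persist, treat those $k+1$ persistence events as independent with probability $\epsilon$ each, and multiply. The two caveats you flag at the end --- making ``forces'' precise enough that persistence of $E_0$ genuinely entails $E_1,\ldots,E_k$, and ensuring the $k+1$ events live on distinct abutting-side pairs so the independence clause applies --- are exactly the modeling assumptions baked into the independent error model of \cite{sahu2010capabilities}, not lemmas to be proven here. If you want a rigorous argument you should consult that reference; within the present paper the statement functions as a definition-level consequence of the model rather than as a theorem with its own proof.
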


We will show that an initial mismatch error in a TAS $\R$ created by Construction \ref{con:MainConstruction} forces at least one additional mismatch error in the neighborhood of the initial mismatch error, thereby reducing the rate of mismatch errors in the final assembly to $\epsilon^{2}$.  We emphasize here that a mismatch error is between the entire glues on the abutting sides of two tiles. In our case, each glue color represents a pair of tuples $(u,v)$, and so a mismatch between any of the respective elements of the two glue colors on the abutting sides of two tiles constitutes a mismatch error between those two tiles. We will need the following lemma. 

\begin{lemma}\label{lem:NoUVWMismatches}
Let $\R$ be a TAS created by Construction \ref{con:MainConstruction} and let $(x,y) \in \Z^2$ where $x>0$ and $y>0$. Let $u=\T[x-1,y-1]$, $v=\T[x,y-1]$ and $w=\T[x-1,y]$. If there is no mismatch error between $u$ and $v$, and no mismatch error between $u$ and $w$, then all of the following equalities hold:
\begin{align*}
	\first(\gcolor_v(\north)) & \overset{(1)}{=} \first(\gcolor_w(\east))\\	
	\first(\gcolor_v(\north)) \dshift \second(\gcolor_v(\north)) & \overset{(2)}{=} \first(\gcolor_v(\east)) \shift \second(\gcolor_v(\east))\\
	\first(\gcolor_w(\east)) \shift \second(\gcolor_w(\east)) & \overset{(3)}{=} \first(\gcolor_w(\north)) \dshift \second(\gcolor_w(\north))\,.
\end{align*}
\end{lemma}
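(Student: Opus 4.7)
The plan is to prove (2) and (3) as pure algebraic identities between the shift-insert and deep-shift-insert operations that hold for any tile in $R$ regardless of the surrounding configuration, and then to derive (1) from (2) applied to $u$ together with a direct substitution of the two no-mismatch hypotheses.

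First I would fix notation. For every tile $t$ placed anywhere in the assembly, Construction \ref{con:MainConstruction} equips $t$ with associated data $b_t, r_t, c_t, \ell_t$ (the block, row, col, and label inherited from the underlying $\T$-tile) such that
\begin{align*}
\gcolor_t(\west) &= (b_t, r_t), & \gcolor_t(\east) &= (b_t \dshift c_t,\ r_t \shift \ell_t),\\
\gcolor_t(\south) &= (b_t, c_t), & \gcolor_t(\north) &= (b_t \shift r_t,\ c_t \shift \ell_t).
\end{align*}
I will work entirely in terms of the data $(b,r,c,\ell)$ of each of the three tiles $u,v,w$, and note that these formulas hold independently of whether the tile is ``correct'' for its location.

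To prove (2) and (3), I would expand both sides of each using the definitions in Section \ref{sec:Preliminaries}. For any $t$, both $(b_t \shift r_t) \dshift (c_t \shift \ell_t)$ and $(b_t \dshift c_t) \shift (r_t \shift \ell_t)$ unfold to the same $(h-1)$-tuple of $(w-1)$-tuples whose $i$-th entry is $(b_t)_{i+1} \shift (c_t)_{i+1}$ for $i < h-1$ and whose last entry is $r_t \shift \ell_t$: the outer $\shift$ on the left simply pops the first row of $b_t \dshift c_t$ and appends $r_t \shift \ell_t$, while on the right the deep-shift-insert pops that same first row of $b_t$ (because $\shift$ was applied already) and similarly shifts each remaining row by the corresponding column entry. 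Specializing $t=v$ yields (2), and $t=w$ yields (3); neither case uses a hypothesis.

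For (1), the no-mismatch assumption between $u$ and $v$ reads $\gcolor_u(\east) = \gcolor_v(\west)$, which decomposes into $b_v = b_u \dshift c_u$ and $r_v = r_u \shift \ell_u$, and the no-mismatch assumption between $u$ and $w$ reads $\gcolor_u(\north) = \gcolor_w(\south)$, which decomposes into $b_w = b_u \shift r_u$ and $c_w = c_u \shift \ell_u$. Substituting,
\[
\first(\gcolor_v(\north)) = b_v \shift r_v = (b_u \dshift c_u) \shift (r_u \shift \ell_u),
\]
\[
\first(\gcolor_w(\east)) = b_w \dshift c_w = (b_u \shift r_u) \dshift (c_u \shift \ell_u),
\]
and the equality of these two expressions is exactly the identity already established in the previous paragraph, applied with $t = u$. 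The only real obstacle is keeping the shapes of the operands straight under the $\shift$/$\dshift$ calculus (an $(h-1)$-tuple of $(w-1)$-tuples being shifted by an $(h-1)$-tuple of scalars, etc.); once that bookkeeping is done carefully, all three equalities fall out with no geometric input beyond the bare glue definitions in Construction \ref{con:MainConstruction}.
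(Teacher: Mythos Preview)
Your argument is correct, and it is organized more cleanly than the paper's own proof. The core of both proofs is the same tuple identity
\[
(b \shift r)\ \dshift\ (c \shift \ell)\ =\ (b \dshift c)\ \shift\ (r \shift \ell),
\]
which you verify once and then instantiate three times (at $u$ for (1), at $v$ for (2), at $w$ for (3)). The paper, by contrast, first rewrites all the output glues of $v$ and $w$ in terms of $u$'s input data via the no-mismatch hypotheses, and only then expands each of (1), (2), (3) separately into explicit coordinate tuples; in particular it invokes the hypotheses in the derivations of (2) and (3) even though, as you observe, those two equalities already hold for any single tile of $\R$ with no assumption on its neighbors. Your route isolates the one place the hypotheses are genuinely needed (equality (1)), and makes transparent that (2) and (3) are purely structural facts about Construction~\ref{con:MainConstruction}; the paper's route trades that clarity for a completely explicit coordinate-by-coordinate check.
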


 The proof of Lemma \ref{lem:NoUVWMismatches} is straightforward and given in the technical appendix. 

\begin{theorem}\label{thm:RForcesOneMore}
Let $\R$ be a TAS created by Construction \ref{con:MainConstruction}, and let $(x,y) \in \Z^2$.
A mismatch error between the tiles $\R[x,y]$ and $\R[x,y-1]$ or between the tiles $\R[x,y]$ and $\R[x-1,y]$, forces a further mismatch error between tiles in $N(x,y)$.
\end{theorem}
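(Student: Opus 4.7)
The plan is to argue by contradiction. By symmetry, I may assume the initial mismatch is between $t = \R[x,y]$ and $v = \R[x,y-1]$; the case of a mismatch between $t$ and $w = \R[x-1,y]$ follows by the analogous argument with the horizontal and vertical axes swapped, using the tiles $\R[x-1,y+1]$ and $\R[x,y+1]$ and invoking Lemma~\ref{lem:NoUVWMismatches}(3) in place of Lemma~\ref{lem:NoUVWMismatches}(2). Let $u = \R[x-1,y-1]$, $p = \R[x+1,y-1]$ and $q = \R[x+1,y]$, so that $u,v,w,t,p,q$ all lie in $N(x,y)$. I will assume toward contradiction that none of the six adjacent pairs $\{u,v\},\{u,w\},\{t,w\},\{v,p\},\{t,q\},\{p,q\}$ is a mismatch, and derive $\gcolor_t(\south) = \gcolor_v(\north)$, contradicting the assumed mismatch between $t$ and $v$.

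The key structural observation is that every tile $r$ produced by Construction~\ref{con:MainConstruction} shares its $block$ across its west and south glues, so $\first(\gcolor_r(\west)) = \first(\gcolor_r(\south))$. The first step uses this to show that the first components of $\gcolor_t(\south)$ and $\gcolor_v(\north)$ agree: $\first(\gcolor_t(\south)) = \first(\gcolor_t(\west))$ by the block-sharing property for $t$; $\first(\gcolor_t(\west)) = \first(\gcolor_w(\east))$ because there is no mismatch between $t$ and $w$; and $\first(\gcolor_w(\east)) = \first(\gcolor_v(\north))$ by Lemma~\ref{lem:NoUVWMismatches}(1), whose hypotheses are satisfied by the assumed absence of mismatches at $\{u,v\}$ and $\{u,w\}$.

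The harder second step shows that the second components also agree. I apply the block-sharing identity to $q$, obtaining $\first(\gcolor_q(\west)) = \first(\gcolor_q(\south))$, and expand each side. On the west, no mismatch between $t$ and $q$ together with the definition of $\gcolor_t(\east)$ in Construction~\ref{con:MainConstruction} gives
\[
\first(\gcolor_q(\west)) = \first(\gcolor_t(\east)) = \first(\gcolor_t(\south)) \dshift \second(\gcolor_t(\south)).
\]
On the south, no mismatch between $p$ and $q$ followed by the definition of $\gcolor_p(\north)$, then no mismatch between $v$ and $p$, and finally Lemma~\ref{lem:NoUVWMismatches}(2) applied to $v$, yield
\[
\first(\gcolor_q(\south)) = \first(\gcolor_v(\east)) \mapsfrom \second(\gcolor_v(\east)) = \first(\gcolor_v(\north)) \dshift \second(\gcolor_v(\north)).
\]
Combining the two expansions with Step~1 and noting that $X \dshift Y$ is injective in $Y$ for fixed $X$ (each $Y_i$ is recorded as the last coordinate of the $i$-th inner tuple of $X \dshift Y$), I conclude $\second(\gcolor_t(\south)) = \second(\gcolor_v(\north))$.

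The main conceptual obstacle is Step~2. A first-component discrepancy is ruled out by block information flowing through $u$, but to rule out a second-component discrepancy (i.e., in the $col$ portion of the glue), I must carry block information one step farther east, through $p$ and $q$, and invoke the internal tile-type identity of Lemma~\ref{lem:NoUVWMismatches}(2) to re-express east-facing information of $v$ as its north-facing counterpart. Once these pieces are in place the rest is routine tuple bookkeeping, and the conclusion is that at least one of the six listed adjacent pairs in $N(x,y)$ must itself be a mismatch.
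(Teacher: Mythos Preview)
Your proposal is correct and follows essentially the same approach as the paper: the same neighbors $u,v,w,p,q$ (and $r,s$ for the symmetric case), the same structural identity $\first(\gcolor_r(\west))=\first(\gcolor_r(\south))$ from Construction~\ref{con:MainConstruction}, Lemma~\ref{lem:NoUVWMismatches}(1) for the first component, and the chain through $p,q$ together with Lemma~\ref{lem:NoUVWMismatches}(2) for the second component. The only organizational difference is that the paper splits into subcases according to which component of the glue is mismatched (Case~1a versus Case~1b), while you assume all six candidate pairs match and derive $\gcolor_t(\south)=\gcolor_v(\north)$ in one sweep; the injectivity remark you use to recover $\second(\gcolor_t(\south))=\second(\gcolor_v(\north))$ is exactly what makes the paper's Case~1b contradiction go through, so the content is the same.
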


\begin{proof}
We refer to the tiles in $N(x,y)$ by $t=\R[x,y]$, $u=\R[x-1,y-1]$, $v=\R[x,y-1]$, $w=\R[x,y-1]$, $p=\R[x+1,y-1]$, $q=\R[x+1,y]$, $r=\R[x-1,y+1]$, $s=\R[x,y+1]$, and $x=\R[x+1,y+1]$. 
By assumption, the initial mismatch error is either between the tiles $t$ and $v$, or between the tiles $t$ and $w$.
We can further assume that there are no mismatch errors between the tiles $u$ and $v$, or between the tiles $u$ and $w$, otherwise the theorem is trivially true. Hence, the equalities (1), (2) and (3) given in Lemma \ref{lem:NoUVWMismatches} hold.
Our goal is to show that, given an initial mismatch error either between the tiles $t$ and $v$ or between the tiles $t$ and $w$, there is a further mismatch error in $N(x,y)$.
We separate into two cases:

{\em Case 1:} The initial mismatch error is between the tiles $t$ and $v$. Then, either $\first(\gcolor_t(\south)) \ne \first(\gcolor_v(\north))$ or $\second(\gcolor_t(\south)) \ne \second(\gcolor_v(\north))$. 

{\em Case 1a:} In this case we assume that the mismatch error between the tiles $t$ and $v$ is because $\first(\gcolor_t(\south)) \ne \first(\gcolor_v(\north))$. 
Then,
\begin{align*}
	\first(\gcolor_t(\west)) &\overset{\text{(a)}}{=} \first(\gcolor_t(\south))\\
					 &\overset{\text{(b)}}{\ne} \first(\gcolor_v(\north))\\
					 &\overset{\text{(c)}}{=} \first(\gcolor_w(\east))
\end{align*}
where (a) follows from Construction \ref{con:MainConstruction}; (b) follows from the assumption of Case 1a; and (c) follows from Lemma \ref{lem:NoUVWMismatches}(1).
So we have shown that in Case 1a, $\first(\gcolor_t(\west)) \ne \first(\gcolor_w(\east))$, and hence, there is a further mismatch error between the tiles $t$ and $w$. 

{\em Case 1b:} In this case we assume that Case 1a does not hold, and so the mismatch error between the tiles $t$ and $v$ is because $\second(\gcolor_t(\south)) \ne \second(\gcolor_v(\north))$.
For sake of contradiction, suppose that there are no mismatch errors between the tiles $t$ and $q$, between the tiles $q$ and $p$, and between the tiles $p$ and $v$. 
Then, 
\begin{align*}
	\first(\gcolor_t(\east)) &\overset{\text{(d)}}{=} \first(\gcolor_t(\south)) \dshift \second(\gcolor_t(\south))\\
					 &\overset{\text{(e)}}{\ne} \first(\gcolor_v(\north))  \dshift \second(\gcolor_v(\north))\\
					 &\overset{\text{(f)}}{=} \first(\gcolor_v(\east)) \shift \second(\gcolor_v(\east))\\
					 &\overset{\text{(g)}}{=} \first(\gcolor_p(\west)) \shift \second(\gcolor_p(\west)))\\
					 &\overset{\text{(h)}}{=} \first(\gcolor_p(\north))\\
					 &\overset{\text{(i)}}{=} \first(\gcolor_q(\south))\\
					 &\overset{\text{(j)}}{=} \first(\gcolor_q(\west))\\
					 &\overset{\text{(k)}}{=} \first(\gcolor_t(\east))
\end{align*}
where (d), (h) and (j) follow from Construction \ref{con:MainConstruction}; (e) follows from the assumption of Case 1b; (f) follows from Lemma \ref{lem:NoUVWMismatches}(2); (g) follows from the assumption that there is no mismatch error between $p$ and $v$; (i) follows from the assumption that there is no mismatch error between $q$ and $p$; and (k) follows from the assumption that there is no mismatch error between $t$ and $q$. Thus, under the supposition that there are no mismatch errors between $t$ and $q$, between $q$ and $p$, and between $p$ and $v$, we have that $\first(\gcolor_t(\east)) \ne \first(\gcolor_t(\east))$ which is a contradiction. Hence, there must be a further mismatch error either between the tiles $t$ and $q$, or between the tiles $q$ and $p$, or between the tiles $p$ and $v$. 

\begin{figure}[!t]
\centering
\subfloat[][Case 1]{

	\begin{tikzpicture}[scale=.5]

	\draw (0,0) rectangle (4,2);
	\draw (0,0) rectangle (2,4);
	
	% locations of initial mismatch

	 \fill[red] (3,2) circle (6pt);
	 \draw[black] (3,2) circle (6pt);
	 
          \fill[blue] (2,3) circle (6pt);
	  \draw[black] (2,3) circle (6pt);
	  
	  \fill[green] (4,3) circle (6pt);
	  \draw[black] (4,3) circle (6pt);
	  
	  \fill[green] (5,2) circle (6pt);
	  \draw[black] (5,2) circle (6pt);
	  
	  \fill[green] (4,1) circle (6pt);
	  \draw[black] (4,1) circle (6pt);

	\draw[step=2cm] (0,0) grid (6,6);
	
	% labels
	 \node at (1,1) {$\tlabel(u)$} ;
	 \node at (3,1) {$\tlabel(v)$} ;
	 \node at (5,1) {$\tlabel(p)$} ;
	 
	 \node at (1,3) {$\tlabel(w)$} ;
	 \node at (3,3) {$\tlabel(t)$} ;
	 \node at (5,3) {$\tlabel(q)$} ;
	 
	 \node at (1,5) {$\tlabel(r)$} ;
	 \node at (3,5) {$\tlabel(s)$} ;
	 \node at (5,5) {$\tlabel(x)$} ;

	\end{tikzpicture}
	\label{fig:NeighborhoodCase1}
}
\hspace{3em}
\subfloat[][Case 2]{

	\begin{tikzpicture}[scale=.5]

	\draw (0,0) rectangle (4,2);
	\draw (0,0) rectangle (2,4);

          \fill[red] (2,3) circle (6pt);
	 \draw[black] (2,3) circle (6pt);
	  
	 \fill[blue] (3,2) circle (6pt);
	 \draw[black] (3,2) circle (6pt);
	 
	  \fill[green] (3,4) circle (6pt);
	  \draw[black] (3,4) circle (6pt);
	  
	  \fill[green] (2,5) circle (6pt);
	  \draw[black!75] (2,5) circle (6pt);
	  
	  \fill[green] (1,4) circle (6pt);
	  \draw[black] (1,4) circle (6pt);
	 
	\draw[step=2cm] (0,0) grid (6,6);
	
	% labels
	 \node at (1,1) {$\tlabel(u)$} ;
	 \node at (3,1) {$\tlabel(v)$} ;
	 \node at (5,1) {$\tlabel(p)$} ;
	 
	 \node at (1,3) {$\tlabel(w)$} ;
	 \node at (3,3) {$\tlabel(t)$} ;
	 \node at (5,3) {$\tlabel(q)$} ;
	 
	 \node at (1,5) {$\tlabel(r)$} ;
	 \node at (3,5) {$\tlabel(s)$} ;
	 \node at (5,5) {$\tlabel(x)$} ;

	\end{tikzpicture}
	\label{fig:NeighborhoodCase2}
}

\caption{The neighborhood of a mismatch error. (a) shows Case 1 in the proof of Theorem \ref{thm:RForcesOneMore} where the initial mismatch error at the red location forces a further mismatch at the blue location (in Case 1a), or at one of the green locations (in Case 1b). (b) shows Case 2 in the proof of Theorem \ref{thm:RForcesOneMore} where the initial mismatch error at the red location forces a further mismatch at the blue location (in Case 2a), or at one of the green locations (in Case 2b).}
\label{fig:Neighborhood}
\end{figure}
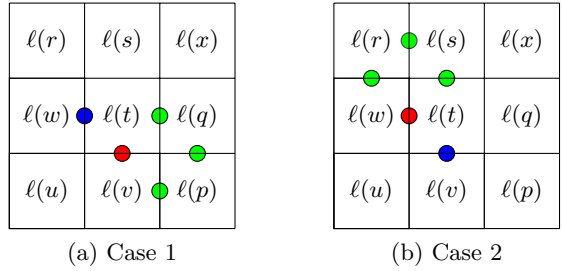

{\em Case 2:} The initial mismatch error is between the tiles $t$ and $w$. Then, either $\first(\gcolor_t(\west)) \ne \first(\gcolor_w(\east))$ or $\second(\gcolor_t(\west)) \ne \second(\gcolor_w(\east))$. 

{\em Case 2a:} In this case we assume that the mismatch error between the tiles $t$ and $w$ is because $\first(\gcolor_t(\west)) \ne \first(\gcolor_w(\east))$. 
Then,
\begin{align*}
	\first(\gcolor_t(\south)) &\overset{\text{(l)}}{=} \first(\gcolor_t(\west))\\
					 &\overset{\text{(m)}}{\ne} \first(\gcolor_w(\east))\\
					 &\overset{\text{(n)}}{=} \first(\gcolor_v(\north))
\end{align*}
where (l) follows from Construction \ref{con:MainConstruction}; (m) follows from the assumption of Case 2a; and (n) follows from Lemma \ref{lem:NoUVWMismatches}(1).
So we have shown that in Case 2a, $\first(\gcolor_t(\south)) \ne \first(\gcolor_v(\north))$, and hence, there is a further mismatch error between the tiles $t$ and $v$. 

{\em Case 2b:}  In this case we assume that Case 2a does not hold, and so the mismatch error between the tiles $t$ and $w$ is because $\second(\gcolor_t(\west)) \ne \second(\gcolor_w(\east))$. 
For sake of contradiction, suppose that there are no mismatch errors between the tiles $t$ and $s$, between the tiles $s$ and $r$, and between the tiles $r$ and $w$. 
Then, 
\begin{align*}
	\first(\gcolor_t(\north)) &\overset{\text{(o)}}{=} \first(\gcolor_t(\west)) \shift \second(\gcolor_t(\west))\\
					  &\overset{\text{(p)}}{\ne} \first(\gcolor_w(\east)) \shift \second(\gcolor_w(\east))\\
					  &\overset{\text{(q)}}{=} \first(\gcolor_w(\north)) \dshift \second(\gcolor_w(\north))\\
					  &\overset{\text{(r)}}{=} \first(\gcolor_r(\south)) \dshift \second(\gcolor_r(\south)))\\
					  &\overset{\text{(s)}}{=} \first(\gcolor_r(\east))\\
					  &\overset{\text{(t)}}{=} \first(\gcolor_s(\west))\\
					  &\overset{\text{(u)}}{=} \first(\gcolor_s(\south))\\
					  &\overset{\text{(v)}}{=} \first(\gcolor_t(\north))
\end{align*}
where (o), (s) and (u) follow from Construction \ref{con:MainConstruction}; (p) follows from the assumption of Case 2b; (q) follows from Lemma \ref{lem:NoUVWMismatches}(3); (r) follows from the assumption that there is no mismatch error between $r$ and $w$; (f) follows from the assumption that there is no mismatch error between $s$ and $r$; and (v) follows from the assumption that there is no mismatch error between $t$ and $s$. 
 Thus, under the supposition that there are no mismatch errors between $t$ and $s$, between $s$ and $r$, and between $r$ and $w$, we have that $\first(\gcolor_t(\north))\ne \first(\gcolor_t(\north))$ which is a contradiction. Hence, there must be a further mismatch error either between the tiles $t$ and $s$, or between the tiles $s$ and $r$, or between the tiles $r$ and $w$. 

In all cases, there is a further mismatch error between tiles in $N(x,y)$. 
\qed
\end{proof}

See Figure \ref{fig:Neighborhood} for an illustration of the proof of Theorem \ref{thm:RForcesOneMore}. Theorems \ref{thm:IndependentErrorModel} and \ref{thm:RForcesOneMore} immediately give the following corollary.

\begin{corollary}\label{cor:RPEps2}
 The rate of mismatch errors in the final assembly of a TAS $\R$ created by Construction \ref{con:MainConstruction} is $\epsilon^2$.
\end{corollary}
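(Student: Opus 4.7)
The plan is to obtain this corollary as an immediate consequence of the two preceding theorems; no fresh argument is needed beyond combining them. The proof I would write has essentially the following shape: Theorem \ref{thm:RForcesOneMore} shows that in any TAS $\R$ built by Construction \ref{con:MainConstruction}, a mismatch error between $\R[x,y]$ and $\R[x,y-1]$ or between $\R[x,y]$ and $\R[x-1,y]$ forces at least one further mismatch error between tiles in $N(x,y)$. This places $\R$ in the hypothesis of Theorem \ref{thm:IndependentErrorModel} with $k=1$, so the rate of mismatch errors in the final assembly is $\epsilon^{k+1} = \epsilon^2$, as claimed.

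Before applying Theorem \ref{thm:IndependentErrorModel}, I would briefly verify that $\R$ satisfies its standing assumption of being rectilinear. Construction \ref{con:MainConstruction} sets $\tstrength_{r_t}(\vec{u}) = \tstrength_t(\vec{u})$ for every $\vec{u} \in U_2$, so the boundary/interior partition of tile types present in $\T$ transfers unchanged to $\R$. Since $\T$ is rectilinear by Construction \ref{con:KautzLathrop}, so is $\R$, and Theorem \ref{thm:IndependentErrorModel} applies verbatim.

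I do not anticipate any real obstacle. The only small point worth flagging is a possible overshoot: Theorem \ref{thm:RForcesOneMore} guarantees \emph{at least} one additional forced mismatch, so in principle the argument could be tightened if one could show that two further mismatches are always forced, which would yield $\epsilon^3$. Since the present corollary only claims $\epsilon^2$, a single forced mismatch suffices, and no strengthening of the neighborhood counting is required. The proof can therefore be written in one short paragraph citing the two theorems.
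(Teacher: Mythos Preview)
Your proposal is correct and matches the paper's approach exactly: the paper states that the corollary follows immediately from Theorems~\ref{thm:IndependentErrorModel} and~\ref{thm:RForcesOneMore}, which is precisely the combination you give (with $k=1$). Your additional check that $\R$ is rectilinear is a welcome bit of extra care, but it is not something the paper spells out at this point.
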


%% Simulations
%!TEX root = ../Main.tex
\section{Simulations}
\label{sec:Simulations}

\begin{figure}[!b]
\centering

\subfloat[][Sierpinski Triangle $\S$]{\label{fig:SierpinskiTriangle}
	\includegraphics[trim={0 0 257px 257px}, clip, scale=0.36]{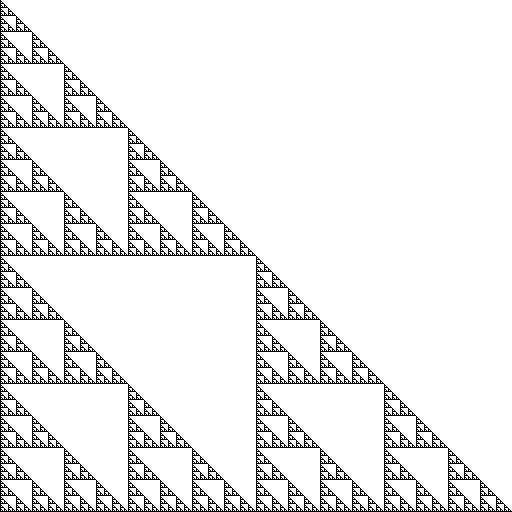}
}
\hspace{1em}
\subfloat[][Sierpinski Carpet $\C$]{\label{fig:SierpinskiCarpet}
	\includegraphics[trim={0 0 269px 269px}, clip, scale=0.3793]{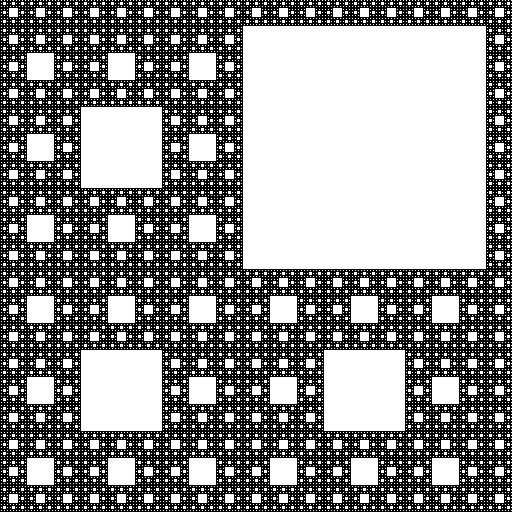}
}
\hspace{1em}
\subfloat[][Pattern $\W$]{\label{fig:PatternW}
	\includegraphics[trim={0  0 257px 257px}, clip, scale=0.36]{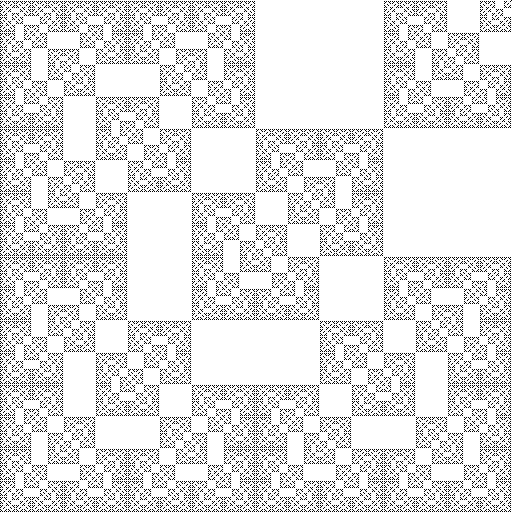}
}
\caption{Some recursively defined patterns.}
\label{fig:NumericallySelfSimilarFractals}
\end{figure}

In this section, we summarize the results of computer simulations we performed using the Xgrow simulator \cite{winfree2003proofreading} to further verify the performance of the compact error-resilient scheme of Construction \ref{con:MainConstruction}. Our simulations were run on three recursively-defined test patterns: the Sierpinski triangle pattern $\S$, the Sierpinski carpet pattern $\C$, and a more complicated pattern which we call {Pattern $\W$}. See Fig. \ref{fig:NumericallySelfSimilarFractals} for an illustration of each of these patterns which are defined formally as follows:

\begin{itemize}\itemsep .5em
\item The {\em Sierpinski triangle pattern}, denoted by $\S$ and illustrated in Fig. \ref{fig:SierpinskiTriangle}, is the $(2,2)$-recursively defined pattern on $\{0,1\}$ where $\S(x,y)=(\S(x,y \text{\,-} 1) + \S(x \text{\,-} 1,y))\!\!\!\mod 2$ when $x,y > 0$; $1$ when $x=0$ or $y=0$; and $\bot$ otherwise.
\item The {\em Sierpinski carpet pattern}, denoted by $\C$ and illustrated in Fig. \ref{fig:SierpinskiCarpet}, is the $(2,2)$-recursively defined pattern on $\{0,1,2\}$ where $\C(x,y) =  (\C(x \text{\,-} 1,y \text{\,-} 1) + \C(x,y \text{\,-} 1) + \C(x \text{\,-} 1,y))\!\!\!\mod 3$ when $x,y >0$; 1 when $x=0$ or $y=0$; and $\bot$ otherwise.
\item {\em Pattern $\W$}, illustrated in Fig. \ref{fig:PatternW}, is the $(3,3)$-recursively defined pattern on $\{0,1\}$ where $\W(x,y)=(\W(x \text{\,-} 2,y \text{\,-} 2) + \W(x,y \text{\,-} 2)\ + \W(x \text{\,-} 1,y \text{\,-} 1) + \W(x \text{\,-} 2,y))\!\!\!\mod 2$ when $x,y > 0$; 1 when $x=0$ and $y$ is even or $x$ is even and $y=0$; 0 when $x=0$ and $y$ is odd or $x$ is odd and $y=0$; and $\bot$ otherwise.
\end{itemize}

For each of these patterns $P$, we ran simulations of the TAS created by Construction \ref{con:KautzLathrop}, denoted by $\T_P$, and the TAS created by Construction \ref{con:MainConstruction}, denoted by $\R_P$. For the Sierpinski triangle, we also ran simulations for the TAS created by the compact error-resilient scheme introduced in \cite{reif2006compact}, denoted by $\B_\S$. Each simulation used a target assembly of $512 \times 512$ tiles, and was allowed to run until the assembly reached 75\% of the target size. 
For each simulation, we computed the number of tiles $N$ in the largest $m \times n$ aggregate assembled (including the origin) without any permanent error. 
Fig. \ref{fig:SimulationResults} shows, for each TAS and each $G_{se}$ value in the range $4.9, 5.2, \ldots, 8.2$, the median value of $N$ for 101 simulations. As suggested in \cite{winfree2003proofreading}, we used a value for $G_{mc}$ slightly less than $2G_{se}$. As can be seen from the figure, our results for the Sierpinski triangle pattern are comparable to the results of \cite{reif2006compact}, and our results for the Sierpinski carpet and $\W$ patterns also show a dramatic improvement over their non error-resilient counterparts. We refer the reader to the website \cite{www} for more information on our simulations.

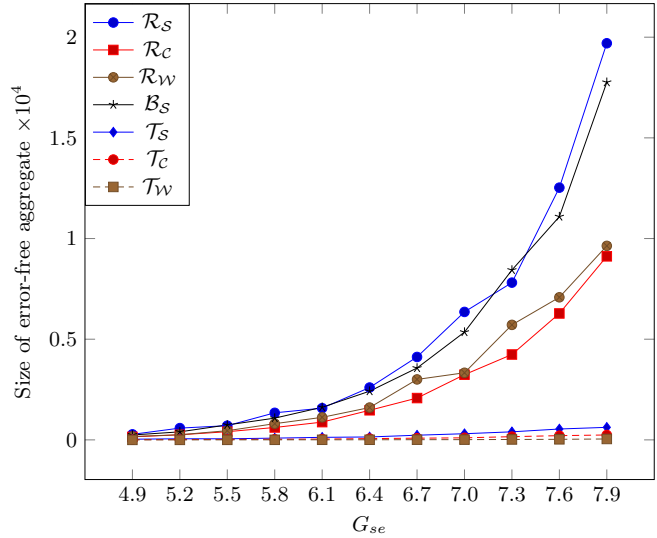
\begin{figure}[!t]
\centering
        \begin{tikzpicture}[scale=.899] %.868]
		\begin{axis}[
			title style={at={(0.5,1.1)}},
			xlabel={$G_{se}$},
			xtick={4.9, 5.2, 5.5, 5.8, 6.1, 6.4, 6.7, 7.0, 7.3, 7.6, 7.9},
			xticklabel style={/pgf/number format/.cd,fixed zerofill,precision=1},
			ylabel={Size of error-free aggregate $\times 10^4$},
			ytick scale label code/.code={},
			%minor y tick num=100,
			legend entries={$\R_\S$,  $\R_\C$, $\R_\W$,  $\B_\S$, $\T_\S$, $\T_\C$, $\T_\W$},
			legend style={anchor=north west,at={(0,1)}}
		]
		
		\addplot table {Simulations/TriangleER.dat};
		\addplot table {Simulations/CarpetER.dat};
		\addplot table {Simulations/W3H3P2M10101010ER.dat};
		\addplot table {Simulations/TriangleReif.dat};
		\addplot table {Simulations/Triangle.dat};
		\addplot table {Simulations/Carpet.dat};	
		\addplot table {Simulations/W3H3P2M10101010.dat};

		\end{axis}
        \end{tikzpicture}
\caption{Simulation results.}
\label{fig:SimulationResults}        
\end{figure}

%% Discussion
%!TEX root = ../Main.tex
\section{Discussion}
\label{sec:Discussion}

We leave open the question of whether there exists a redundancy-based compact error-resilient tiling scheme for recursively defined patterns that reduces the mismatch error rate to $\epsilon^3$. Such a scheme exists for Boolean patterns where $op1$ and $op2$ are pairwise input sensitive \cite{sahu2010capabilities}, but it is also conjectured in \cite{sahu2010capabilities} that no such scheme exists when the operations are not pairwise input sensitive.  The conjecture is proven for the case when $op_1$ and $op_2$ are allowed to be arbitrary Boolean operations. 

We remark here that any rectilinear TAS $\T$ can be transformed into a $(2,2)$-recursively defined TAS $\T'$ that can then be transformed into an error resilient TAS using Construction \ref{con:MainConstruction} given in Section \ref{sec:Results} by allowing the label of a tile in $\T'$ to be a pair whose first element represents the label of the original tile in $\T$, and whose second element represents the tile type of the original tile in $\T$. 
The first element is the value used to determine the pattern of $\T'$, and the second element is the value that is communicated in the $V$ and $H$ values of the rectilinear TAS. However, with this approach, the number of tile types used by $\T'$ can be as high as $n^3$ in the general case, where $n$ is the number of distinct tile types used by $\T$, making this approach impractical for TASs with even modestly sized tile sets.

Although our error-resilient TASs performed well in the simulations summarized in Section \ref{sec:Simulations}, further inspection of the data showed that many of the errors in the simulations were caused by boundary tiles attaching in the interior of the assembly. It is likely that better error-resiliency would result from implementing our error-resilient TASs in two stages, one stage for assembling the boundary using only the boundary tile types, and one stage for assembling the interior using only the interior tile types, thereby preventing boundary tiles from attaching in the interior of the assembly. 

%
%% References
% BibTeX users should specify bibliography style 'splncs04'.
% References will then be sorted and formatted in the correct style.

%\section{References}

\vspace{-.5em}

\bibliographystyle{References/splncs04}
\bibliography{References/CESARPBib}

\begin{thebibliography}{10}
\providecommand{\url}[1]{\texttt{#1}}
\providecommand{\urlprefix}{URL }
\providecommand{\doi}[1]{https://doi.org/#1}

\bibitem{www}
Companion website for compact error-resilient self-assembly of recursively
  defined patterns, {\tt http://www.cs.bsu.edu/\textasciitilde
  bsshutters/pubs/cesarp}

\bibitem{barish2005two}
Barish, R.D., Rothemund, P.W., Winfree, E.: Two computational primitives for
  algorithmic self-assembly: copying and counting. Nano Letters  \textbf{5},
  2586--2592 (2005)

\bibitem{barish2009information}
Barish, R.D., Schulman, R., Rothemund, P.W., Winfree, E.: An
  information-bearing seed for nucleating algorithmic self-assembly.
  Proceedings of the National Academy of Sciences  \textbf{106},  6054--6059
  (2009)

\bibitem{chen2004error}
Chen, H.L., Goel, A.: Error free self-assembly using error prone tiles. In:
  Proceedings of the 10th International Workshop on {DNA} Computing. pp.
  62--75. Springer (2004)

\bibitem{fujibayashi2008toward}
Fujibayashi, K., Hariadi, R., Park, S.H., Winfree, E., Murata, S.: Toward
  reliable algorithmic self-assembly of {DNA} tiles: A fixed-width cellular
  automaton pattern. Nano Letters  \textbf{8},  3554--3560 (2008)

\bibitem{kautz2009self}
Kautz, S.M., Lathrop, J.I.: Self-assembly of the discrete {S}ierpinski carpet
  and related fractals. In: Proceedings of the 15th International Conference on
  {DNA} Computing and Molecular Programming. pp. 78--87. Springer (2009)

\bibitem{reif2006compact}
Reif, J.H., Sahu, S., Yin, P.: Compact error-resilient computational {DNA}
  tilings. In: Nanotechnology: Science and Computation. pp. 79--103. Springer
  (2006)

\bibitem{rothemund2004algorithmic}
Rothemund, P.W., Papadakis, N., Winfree, E.: Algorithmic self-assembly of {DNA}
  {S}ierpinski triangles. PLoS Biology  \textbf{2}, ~e424 (2004)

\bibitem{rothemund2000program}
Rothemund, P.W., Winfree, E.: The program-size complexity of self-assembled
  squares. In: Proceedings of the 32nd Annual {ACM} Symposium on Theory of
  Computing. pp. 459--468. ACM (2000)

\bibitem{sahu2010capabilities}
Sahu, S., Reif, J.H.: Capabilities and limits of compact error resilience
  methods for algorithmic self-assembly. Algorithmica  \textbf{56},  480--504
  (2010)

\bibitem{schulman2015inc}
Schulman, R., Wright, C., Winfree, E.: Increasing redundancy exponentially
  reduces error rates during algorithmic self-assembly. ACS Nano  \textbf{9},
  5760--5771 (2015)

\bibitem{seeman2003dna}
Seeman, N.C.: {DNA} in a material world. Nature  \textbf{421},  427--431 (2003)

\bibitem{soloveichik2007complexity}
Soloveichik, D., Winfree, E.: Complexity of self-assembled shapes. SIAM Journal
  on Computing  \textbf{36},  1544--1569 (2007)

\bibitem{winfree2003proofreading}
Winfree, E., Bekbolatov, R.: Proofreading tile sets: Error correction for
  algorithmic self-assembly. In: Proceedings of the 9th International Workshop
  on {DNA} Based Computers. pp. 126--144. Springer (2003)

\bibitem{winfree1999universal}
Winfree, E., Yang, X., Seeman, N.C.: Universal computation via self-assembly of
  {DNA}: {S}ome theory and experiments, DIMACS, vol.~44, pp. 191--213. American
  Mathematical Society (1999)

\end{thebibliography}

\newpage

%% Technical Appendix
%!TEX root = ../Main.tex
\appendix
\section{Technical Appendix}

\subsection{Proof of Lemma \ref{lem:SamePatternAndSize}.}

Before proving the lemma, we first define an operation on tuples. Let $t$ be an $m$-tuple of $n$-tuples, and let $u$ be an $m$-tuple.  Then, $t \wedge u$ is the $m$-tuple of $(n+1)$-tuples defined by
$$
	( (t_{1,1}, \ldots, t_{1,n}, u_1), \ldots, (t_{m,1}, \ldots, t_{m,n}, u_m) )\, .
$$
%\begin{lemma}
%$|R|=|T|$.
%\end{lemma}

We now prove Lemma \ref{lem:SamePatternAndSize}. For completeness, it is restated here.

\begingroup
\def\thelemma{\ref{lem:SamePatternAndSize}}
\begin{lemma}
Let $\T=(T,\sigma_\T,2)$ be a TAS created by Construction \ref{con:KautzLathrop}, and let $\R=(R,\sigma_\R,2)$ be the TAS created by Construction \ref{con:MainConstruction} given $\T$. Then $P(\R)=P(\T)$ and $|R|=|T|$.
\end{lemma}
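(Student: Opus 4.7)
The plan is to prove the two claims $|R|=|T|$ and $P(\R)=P(\T)$ in turn.

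For $|R|=|T|$, note that Construction \ref{con:MainConstruction} loops once over $T$, so $|R|\le|T|$; it suffices to verify that the map $t\mapsto r_t$ is injective. From the definitions, $\tcolor_{r_t}(\South)=(block,col)$ recovers $b=\tcolor_t(\South)$ ($block$ supplies its first $w-1$ columns and $col$ its $w$th column), and $\tcolor_{r_t}(\West)=(block,row)$ recovers $row=\tcolor_t(\West)$. In Construction \ref{con:KautzLathrop}, however, a tile type $t$ is uniquely determined by the pair $(\tcolor_t(\South),\tcolor_t(\West))$: its label is $f_P$ applied to these, and its North and East glues are derived from these and its label. Hence distinct $t,t'\in T$ yield distinct $r_t,r_{t'}\in R$.

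For $P(\R)=P(\T)$, the plan is first to observe that $\R$ inherits rectilinearity and the boundary/interior partition from $\T$, since Construction \ref{con:MainConstruction} preserves strengths and labels, and $\sigma_\R=r_{\sigma_\T}$. Because $\T$ uniquely self-assembles, it suffices to prove by induction on the order of tile addition in $\R$ that $\R[x,y]=r_{\T[x,y]}$ whenever $\T[x,y]\ne\bot$. The base case is immediate. In the inductive step at an interior position $(x,y)$, one must verify the two glue equalities
\[
\tcolor_{r_{\T[x-1,y]}}(\East) = \tcolor_{r_{\T[x,y]}}(\West) \quad\text{and}\quad \tcolor_{r_{\T[x,y-1]}}(\North) = \tcolor_{r_{\T[x,y]}}(\South),
\]
and appeal to uniqueness of binding in a rectilinear TAS at temperature $2$; boundary positions are handled analogously using the strength-$2$ glues on the relevant sides, and the strength-$0$ sides of the seed require no match at all.

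The core technical step is unpacking these equalities. Writing out each tile's $block$, $row$, and $col$ in terms of $\block_P^{w-1,h-1}$, $\row_P^{w-1}$, and $\col_P^{h-1}$ anchored at the correct positions, a direct calculation shows that the deep-shift-insert $block'\dshift col'$ in $\tcolor_{r_{\T[x-1,y]}}(\East)$ takes the block anchored at $(x-2,y-1)$ and shifts it one column east by appending, row by row, the column stored in $col'$, producing the block anchored at $(x-1,y-1)$ — exactly the $block$ in $\tcolor_{r_{\T[x,y]}}(\West)$; likewise $row'\mapsfrom\tlabel(\T[x-1,y])$ simplifies to $\row_P^{w-1}(x-1,y)$. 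The North/South equality is symmetric, using $\mapsfrom$ on $block''$ together with $row''$, and on $col''$ together with $\tlabel(\T[x,y-1])$, to propagate a row upward. The hard part will be bookkeeping: ensuring that the $(w-1)\times(h-1)$ block redundantly encoded on both the West and South sides of every tile is updated consistently by the two independent shifting operations applied in the neighbors' East and North glues, so that strength-$2$ binding forces $\R[x,y]=r_{\T[x,y]}$, after which $\tlabel(r_t)=\tlabel(t)$ immediately yields $P(\R)=P(\T)$.
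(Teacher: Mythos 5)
Your proposal is correct and follows essentially the same route as the paper's proof: injectivity of $t \mapsto r_t$ by recovering $\tcolor_t(\South)$ and $\tcolor_t(\West)$ from the components of the glues of $r_t$, followed by a position-by-position identification $\R[x,y]=r_{\T[x,y]}$ whose label equality gives $P(\R)=P(\T)$. The paper leaves the induction over assembly order and the shift-insert bookkeeping largely implicit (packaging the glue recovery in an auxiliary $\wedge$ operation), whereas you spell these out, but the substance is identical.
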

\addtocounter{lemma}{-1}
\endgroup

\begin{proof}[of Lemma \ref{lem:SamePatternAndSize}]

Consider a tile type $t \in T$, and the tile type $r_t \in R$ created by Construction \ref{con:MainConstruction} to replace $t$. By Construction \ref{con:MainConstruction}, $\tlabel(t)=\tlabel(r_t)$, $\gcolor_t(\south) = \first(\gcolor_{r_t}(\south)) \wedge \second(\gcolor_{r_t}(\south))$, and $\gcolor_t(\west)=\second(\gcolor_{r_t}(\west))$. 

	We first show that the function $f: T \to R$ defined by $f(t) = r_t$ for all $t \in T$ is a bijection and hence $|T|=|R|$. Since each tile type $r \in R$ is created from a tile type $t \in T$, it is clear that $f$ is onto. To see that $f$ is 1-1, let $t_1,t_2 \in T$ such that $f(t_1)=f(t_2)=r$. Then, $\tlabel(t_1)=\tlabel(t_2)=\tlabel(r)$, $\gcolor_{t_1}(\south)=\gcolor_{t_2}(\south) = \first(\gcolor_{r}(\south)) \wedge \second(\gcolor_{r}(\south))$, and $\gcolor_{t_1}(\west)=\gcolor_{t_2}(\west)=\second(\gcolor_{r}(\west))$. It follows by Construction \ref{con:KautzLathrop} that $t_1$=$t_2$. Hence, $f$ is 1-1.
	
	To see that $P(\R)=P(\T)$, we can assume that $\T$ was created by Construction \ref{con:KautzLathrop} for a $(w,h)$-recursively defined pattern $P$ for some integers $w,h \ge 2$. Now let $(x,y) \in \Z^2$ such that $x \ge 0$ and $y \ge 0$, and let $t = \T[x,y]$. It suffices to show that $\R[x,y] = r_t$. Since $t = \T[x,y]$, by Construction \ref{con:KautzLathrop}, $\gcolor_t(\west) = \row^{w-1}_P(x-1,y)$ and $\gcolor_t(\south) = \block^{w,h-1}_P(x,y-1)$. 
	Then, by Construction \ref{con:MainConstruction}, $\R[x,y]$ is the tile $r \in R$ such that $\block^{w,h-1}_P(x,y-1) = \first(\gcolor_r(\south)) \wedge \second(\gcolor_r(\south))$ and $\row^{w-1}_P(x-1,y) = \second(\gcolor_r(\west))$. Since the function $f$ is 1-1, there is only one such tile, namely $r_t$.
\qed
\end{proof}

\subsection{Proof of Lemma \ref{lem:NoUVWMismatches}.}

Here we prove Lemma \ref{lem:NoUVWMismatches}. For completeness, it is restated here.

\begingroup
\def\thelemma{\ref{lem:NoUVWMismatches}}
\begin{lemma}
Let $\R$ be a TAS created by Construction \ref{con:MainConstruction} and let $(x,y) \in \Z^2$ where $x>0$ and $y>0$. Let $u=\T[x-1,y-1]$, $v=\T[x,y-1]$ and $w=\T[x-1,y]$. If there is no mismatch error between $u$ and $v$, and no mismatch error between $u$ and $w$, then all of the following equalities hold:
\begin{align*}
	\first(\gcolor_v(\north)) & \overset{(1)}{=} \first(\gcolor_w(\east))\\	
	\first(\gcolor_v(\north)) \dshift \second(\gcolor_v(\north)) & \overset{(2)}{=} \first(\gcolor_v(\east)) \shift \second(\gcolor_v(\east))\\
	\first(\gcolor_w(\east)) \shift \second(\gcolor_w(\east)) & \overset{(3)}{=} \first(\gcolor_w(\north)) \dshift \second(\gcolor_w(\north))\,.
\end{align*}
\end{lemma}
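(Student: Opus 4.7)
The plan is to reduce all three equalities to a single combinatorial identity relating the tuple operations $\shift$ and $\dshift$, and then to verify that identity by directly unfolding the definitions. By Construction \ref{con:MainConstruction}, every tile $r \in R$ is determined by attributes $(\block_r, \row_r, \col_r, \ell_r)$ satisfying $\gcolor_r(\west) = (\block_r, \row_r)$, $\gcolor_r(\south) = (\block_r, \col_r)$, $\gcolor_r(\east) = (\block_r \dshift \col_r,\, \row_r \shift \ell_r)$, and $\gcolor_r(\north) = (\block_r \shift \row_r,\, \col_r \shift \ell_r)$. I use subscripts $u$, $v$, $w$ on these attributes throughout to refer to the corresponding tiles.

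First I would translate the no-mismatch hypotheses into equations on these attributes. Since $u$ is the western neighbor of $v$ and the southern neighbor of $w$, the absence of mismatches means $\gcolor_u(\east) = \gcolor_v(\west)$ and $\gcolor_u(\north) = \gcolor_w(\south)$. Matching the two coordinates of each pair yields the four relations $\block_v = \block_u \dshift \col_u$, $\row_v = \row_u \shift \ell_u$, $\block_w = \block_u \shift \row_u$, and $\col_w = \col_u \shift \ell_u$.

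The core step is the algebraic identity $(X \dshift Y) \shift (Z \shift E) = (X \shift Z) \dshift (Y \shift E)$, valid whenever $X$ is an $m$-tuple of $n$-tuples, $Y$ is an $m$-tuple, $Z$ is an $n$-tuple, and $E$ is an element. I would verify this by direct expansion: both sides produce the same $m$-tuple of $n$-tuples, whose first $m-1$ rows are obtained from rows $2,\ldots,m$ of $X$ by dropping the first entry and appending the corresponding entry of $Y$, and whose last row is $(Z_2, \ldots, Z_n, E)$.

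With this identity in hand, all three equalities follow immediately. Equality (2) is the identity applied to the attributes of $v$, since $\first(\gcolor_v(\north))\dshift\second(\gcolor_v(\north)) = (\block_v \shift \row_v) \dshift (\col_v \shift \ell_v)$ while $\first(\gcolor_v(\east))\shift\second(\gcolor_v(\east)) = (\block_v \dshift \col_v) \shift (\row_v \shift \ell_v)$; equality (3) is the same identity applied to $w$; and for equality (1), the four relations above let me rewrite $\first(\gcolor_v(\north)) = (\block_u \dshift \col_u) \shift (\row_u \shift \ell_u)$ and $\first(\gcolor_w(\east)) = (\block_u \shift \row_u) \dshift (\col_u \shift \ell_u)$, which coincide by applying the identity to the attributes of $u$. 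The main obstacle is the index bookkeeping involved in verifying the identity, since $\shift$ and $\dshift$ act on nested tuples and it is easy to misindex the result; once the identity is established cleanly, the three equalities fall out by inspection.
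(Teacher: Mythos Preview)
Your proposal is correct and in fact cleaner than the paper's own argument. The paper establishes each of (1), (2), (3) separately: for each equality it first expresses both sides in terms of the input data $u_b$, $u_c$, $u_r$, $v_c$, $w_r$ of the three tiles (using the no-mismatch hypotheses to pass from $v$'s and $w$'s inputs to $u$'s outputs), and then expands both sides index by index to check they agree. In effect it verifies your identity $(X\dshift Y)\shift(Z\shift E)=(X\shift Z)\dshift(Y\shift E)$ three times over with different instantiations.

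Your abstraction buys two things. First, you isolate the single tuple identity that underlies all three equalities, so the index chasing is done once rather than three times. Second, and more interestingly, your formulation makes it transparent that equalities (2) and (3) are intrinsic to any tile of $R$: they follow from the identity applied to the attributes $(\block_v,\row_v,\col_v,\ell_v)$ and $(\block_w,\row_w,\col_w,\ell_w)$ respectively, with no appeal to the no-mismatch hypotheses at all. Only equality (1) genuinely needs those hypotheses, to rewrite $\block_v,\row_v,\block_w,\col_w$ in terms of $u$'s data before invoking the identity. The paper's presentation obscures this because it routes all computations through $u$ from the outset.
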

\addtocounter{lemma}{-1}
\endgroup

\begin{figure}[!t]
\centering
	\begin{tikzpicture}[scale=1]
	 
	 % draw the grid.
	\draw[step=2cm] (0,0) grid (2,2);
	\draw[step=2cm] (2,0) grid (4,2);
	\draw[step=2cm] (0,2) grid (2,4);
	
	% labels
	 \node at (1,1) {$\tlabel(u)$} ;
	 \node at (3,1) {$\tlabel(v)$} ;
	 \node at (1,3) {$\tlabel(w)$} ;
	 
	\node[label={[label distance=-2mm]right:\rotatebox{90}{{\normalsize $(u_{b}, u_{r})$}}}] at (0,1) {};
	\node[label={[label distance=-2mm]right:\rotatebox{90}{{\normalsize $(*, w_{r})$}}}] at (0,3) {};
	  
	\node[label={[label distance=-2mm]above:\rotatebox{0}{{\normalsize $(u_{b}, u_{c})$}}}] at (1,0) {};
	\node[label={[label distance=-2mm]above:\rotatebox{0}{{\normalsize $(*, v_{c})$}}}] at (3,0) {};

	\end{tikzpicture}
\caption{The tiles $u$, $v$ and $w$ with no mismatch errors.}
\label{fig:NoUVWMismatches}
\end{figure}
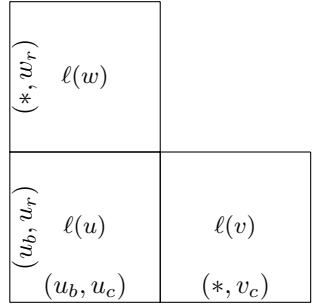

\begin{proof}[of Lemma \ref{lem:NoUVWMismatches}]
We use the following tuples encoded into the glues on the  ``input'' sides of the tiles $u$, $v$ and $w$:
\begin{align*}
	w_{r} &=\second(\gcolor_w(\west))\\
	u_{r} &= \second(\gcolor_u(\west))\\
	u_{b} &= \first(\gcolor_u(\south))=\first(\gcolor_u(\west))\\
	u_{c} &=\second(\gcolor_u(\south))\\
	v_{c} &=\second(\gcolor_v(\south))
\end{align*}
See Fig. \ref{fig:NoUVWMismatches} for an illustration.
We then have that
\begin{align*}
	\first(\gcolor_v(\north)) &\overset{\text{(a)}}{=} \first(\gcolor_v(\west)) \shift \second(\gcolor_v(\west))\\
					 &\overset{\text{(b)}}{=} \first(\gcolor_u(\east)) \shift \second(\gcolor_u(\east))\\
					 &\overset{\text{(c)}}{=} [u_{b}  \dshift u_{c}] \shift [u_{r} \shift \ell(u)]\,,\\
	\second(\gcolor_v(\north)) &\overset{\text{(d)}}{=} v_{c} \shift \tlabel(v)\,,\\
	\first(\gcolor_v(\east))  &\overset{\text{(e)}}{=} \first(\gcolor_v(\west)) \dshift v_{c}\\
					  &\overset{\text{(f)}}{=} \first(\gcolor_u(\east)) \dshift v_{c} \\
					  &\overset{\text{(g)}}{=} [u_{b} \dshift u_{c}] \dshift v_{c}\,,\\
	\second(\gcolor_v(\east)) &\overset{\text{(h)}}{=} \second(\gcolor_v(\west)) \shift \tlabel(v)\\
						&\overset{\text{(i)}}{=} \second(\gcolor_u(\east)) \shift \tlabel(v)\\
					        &\overset{\text{(j)}}{=} [u_{r} \shift \tlabel(u)] \shift \tlabel(v)\,,\\
	\first(\gcolor_w(\north))  &\overset{\text{(k)}}{=} \first(\gcolor_w(\south)) \shift w_{r}\\
				 	     &\overset{\text{(l)}}{=} \first(\gcolor_u(\north)) \shift w_{r}\\
					     &\overset{\text{(m)}}{=} [u_{b} \shift u_{r}] \shift w_{r}\,,\\
	\second(\gcolor_w(\north)) &\overset{\text{(n)}}{=} \second(\gcolor_w(\south)) \shift \tlabel(w)\\
					         &\overset{\text{(o)}}{=} \second(\gcolor_u(\north)) \shift \tlabel(w)\\
						&\overset{\text{(p)}}{=} [u_{c} \shift \tlabel(u)] \shift \tlabel(w)\,,\\
	\first(\gcolor_w(\east)) &\overset{\text{(q)}}{=} \first(\gcolor_w(\south)) \dshift \second(\gcolor_w(\south))\\
					   &\overset{\text{(r)}}{=}  \first(\gcolor_u(\north)) \dshift \second(\gcolor_u(\north))\\
					  &\overset{\text{(s)}}{=} [u_{b} \shift u_{r}] \dshift [u_{c} \shift \tlabel(u)]\,, and\\
	\second(\gcolor_w(\east)) &\overset{\text{(t)}}{=} w_{r} \shift \tlabel(w)
\end{align*}
where equalities
(a), (c), (d), (e), (g), (h), (j), (k), (m), (n), (p), (q), (s) and (t) follow from Construction \ref{con:MainConstruction}; 
equalities (b), (f) and (g) follow from the assumption that there is no mismatch error between $u$ and $v$; and
equalities (i), (o) and (r) follow from the assumption that there is no mismatch error between $u$ and $w$.

We can assume, w.l.o.g., that the TAS $\R$ is created by Construction \ref{con:MainConstruction} from a TAS $\T$ created by Construction \ref{con:KautzLathrop} for a $(m,n)$-recursively defined pattern $P$ for some fixed integers $m,n \ge 2$, and hence, we can further define $u_r$, $u_b$, $u_c$, $v_c$ and $w_r$ as follows:
\begin{align*}
	w_{r} &= (q_1, \ldots, q_{m-1})\\
	u_{r} &= (r_1, \ldots, r_{m-1})\\
	u_{b} &= ( (b_{1,1}, \ldots, b_{1,m-1}), \ldots, (b_{n-1,1}, \ldots, b_{n-1,m-1}) )\\
	u_{c} &= (c_1, \ldots, c_{n-1})\\
	v_{c} &= (d_1, \ldots, d_{n-1})\,.
\end{align*}

{\em Proof of equality (1):} 
We must show that $\first(\gcolor_v(\north)) = \first(\gcolor_w(\east))$. 
By equality (a) above, $\first(\gcolor_v(\north))=[u_{b} \dshift u_{c}] \shift [u_{r} \shift \tlabel(u)]$. 
By equality (s) above, $\first(\gcolor_w(\east))=[u_{b} \shift u_{r}] \dshift [u_{c} \shift \tlabel(u)]$.
Thus, it suffices to show that $[u_{b} \dshift u_{c}] \shift [u_{r} \shift \tlabel(u)] = [u_{b} \shift u_{r}] \dshift [u_{c} \shift \tlabel(u)]$.
This is a straightforward exercise, using the definitions of\ \,$\dshift$\ \,and\ \,$\shift$\ \ given in Section \ref{sec:Preliminaries}:
\begin{align*}
&[u_{b} \dshift u_{c}] \shift [u_{r} \shift \tlabel(u)]\\
= &[u_{b} \dshift u_{c}] \shift [(r_1, \ldots, r_{m-1}) \shift \tlabel(u)]\\
= &[u_{b} \dshift u_{c}] \shift (r_2, \ldots, r_{m-1}, \tlabel(u))\\
= &[( (b_{1,1}, \ldots, b_{1,m-1}), \ldots, (b_{n-1,1}, \ldots, b_{n-1,m-1}) ) \dshift (c_1, \ldots, c_{n-1})]\\
    &\shift (r_2, \ldots, r_{m-1}, \tlabel(u))\\
= &( (b_{1,2}, \ldots, b_{1,m-1}, c_1), \ldots, (b_{n-1,2}, \ldots, b_{n-1,m-1}, c_{n-1}) ) \shift (r_2, \ldots, r_{m-1}, \tlabel(u))\\
= &( (b_{2,2}, \ldots, b_{2,m-1}, c_2), \ldots, (b_{n-1,2}, \ldots, b_{n-1,m-1}, c_{n-1}) , (r_2, \ldots, r_{m-1}, \tlabel(u)) )\,,
\end{align*}
and
\begin{align*}
&[u_{b} \shift u_{r}] \dshift [u_{c} \shift \tlabel(u)]\\
= &[u_{b} \shift u_{r}] \dshift [(c_1, \ldots, c_{n-1}) \shift \tlabel(u)]\\
= &[u_{b} \shift u_{r}] \dshift (c_2, \ldots, c_{n-1}, \tlabel(u))\\
= &[( (b_{1,1}, \ldots, b_{1,m-1}), \ldots, (b_{n-1,1}, \ldots, b_{n-1,m-1}) ) \shift  (r_1, \ldots, r_{m-1})]\\
    &\dshift (c_2, \ldots, c_{n-1}, \tlabel(u))\\
= &( (b_{2,1}, \ldots, b_{2,m-1}), \ldots, (b_{n-1,1}, \ldots, b_{n-1,m-1}), (r_1, \ldots, r_{m-1}))\\
    & \dshift (c_2, \ldots, c_{n-1}, \tlabel(u))\\
= &( (b_{2,2}, \ldots, b_{2,m-1}, c_2), \ldots, (b_{n-1,2}, \ldots, b_{n-1,m-1}, c_{n-1}) , (r_2, \ldots, r_{m-1}, \tlabel(u)) )\,.
\end{align*}

{\em Proof of equality (2):}
We must show that $\first(\gcolor_v(\north)) \dshift \second(\gcolor_v(\north)) = \first(\gcolor_v(\east)) \shift \second(\gcolor_v(\east))$.
By equalities (c) and (d) above, we have that
\begin{align*}
	\first(\gcolor_v(\north)) \dshift \second(\gcolor_v(\north)) = [[u_{b} \dshift u_{c} ] \shift [u_{r} \shift \tlabel(u)]] \dshift [v_{c} \shift \tlabel(v)]
\end{align*}
and, by equalities (g) and (j) above, we have that 
\begin{align*}
	\first(\gcolor_v(\east)) \shift \second(\gcolor_v(\east)) = [[u_{b} \dshift u_{c}] \dshift v_{c}] \shift [[u_{r} \shift \tlabel(u)] \shift \tlabel(v)]\,.
\end{align*}
Thus, it suffices to show that $[[u_{b} \dshift u_{c} ] \shift [u_{r} \shift \tlabel(u)]] \dshift [v_{c} \shift \tlabel(v)] = [[u_{b} \dshift u_{c}] \dshift v_{c}] \shift [[u_{r} \shift \tlabel(u)] \shift \tlabel(v)]$.
This is a straightforward exercise, using the definitions of\ \,$\dshift$\ \,and\ \,$\shift$\ \ given in Section \ref{sec:Preliminaries}:
\begin{align*}
&[[u_{b} \dshift u_{c} ] \shift [u_{r} \shift \tlabel(u)]] \dshift [v_{c} \shift \tlabel(v)] \\
=&[[u_{b} \dshift u_{c} ] \shift [u_{r} \shift \tlabel(u)]]  \dshift [(d_1, \ldots, d_{n-1}) \shift \tlabel(v)] \\
=&[[u_{b} \dshift u_{c} ] \shift [u_{r} \shift \tlabel(u)]]  \dshift (d_2, \ldots, d_{n-1}, \tlabel(v)) \\
=&( (b_{2,2}, \ldots, b_{2,m-1}, c_2), \ldots, (b_{n-1,2}, \ldots, b_{n-1,m-1}, c_{n-1}) , (r_2, \ldots, r_{m-1}, \tlabel(u)) )\\
  & \dshift (d_2, \ldots, d_{n-1}, \tlabel(v)) \\
=&( (b_{2,3}, \ldots, b_{2,m-1}, c_2,d_2), \ldots, (b_{n-1,3}, \ldots, b_{n-1,m-1}, c_{n-1},d_{n-1}),\\
   & (r_3, \ldots, r_{m-1}, \tlabel(u), \tlabel(v) )\,,
\end{align*}
and
\begin{align*}
&[[u_{b} \dshift u_{c}] \dshift v_{c}] \shift [[u_{r} \shift \tlabel(u)] \shift \tlabel(v)]\\
=&[[u_{b} \dshift u_{c}] \dshift v_{c}] \shift [[(r_1, \ldots, r_{m-1}) \shift \tlabel(u)] \shift \tlabel(v)]\\
=&[[u_{b} \dshift u_{c}] \dshift v_{c}] \shift (r_3, \ldots, r_{m-1}, \tlabel(u), \tlabel(v))\\
=&[( (b_{1,2}, \ldots, b_{1,m-1}, c_1), \ldots, (b_{n-1,2}, \ldots, b_{n-1,m-1}, c_{n-1}) ) \dshift v_{c}]\\
   & \shift (r_3, \ldots, r_{m-1}, \tlabel(u), \tlabel(v))\\
=&[( (b_{1,2}, \ldots, b_{1,m-1}, c_1), \ldots, (b_{n-1,2}, \ldots, b_{n-1,m-1}, c_{n-1}) ) \dshift (d_1, \ldots, d_{n-1})]\\
   & \shift (r_3, \ldots, r_{m-1}, \tlabel(u), \tlabel(v))\\
=&( (b_{1,3}, \ldots, b_{1,m-1}, c_1, d_1), \ldots, (b_{n-1,3}, \ldots, b_{n-1,m-1}, c_{n-1}, d_{n-1}) )\\
   & \shift (r_3, \ldots, r_{m-1}, \tlabel(u), \tlabel(v))\\
 =&( (b_{2,3}, \ldots, b_{2,m-1}, c_2, d_2), \ldots, (b_{n-1,3}, \ldots, b_{n-1,m-1}, c_{n-1}, d_{n-1}),\\
    & (r_3, \ldots, r_{m-1}, \tlabel(u), \tlabel(v))\,.
\end{align*}

{\em Proof of equality (3):}
We must show that $\first(\gcolor_w(\east)) \shift \second(\gcolor_w(\east)) = \first(\gcolor_w(\north)) \dshift \second(\gcolor_w(\north))$.
By equlities (s) and (t) above, we have that
\begin{align*}
	\first(\gcolor_w(\east)) \shift \second(\gcolor_w(\east)) =  [[u_{b} \shift u_{r} ] \dshift [u_{c} \shift \tlabel(u)]] \shift [w_{r} \shift \tlabel(w)]
\end{align*}
and, by equalities (m) and (p) above, we have that
\begin{align*}
	\first(\gcolor_w(\north)) \dshift \second(\gcolor_w(\north)) =  [[u_{b} \shift u_{r}] \shift w_{r}] \dshift [[u_{c} \shift \tlabel(u)] \shift \tlabel(w)]\,.
\end{align*}
Thus, it suffices to show that $[[u_{b} \shift u_{r} ] \dshift [u_{c} \shift \tlabel(u)]] \shift [w_{r} \shift \tlabel(w)] =  [[u_{b} \shift u_{r}] \shift w_{r}] \dshift [[u_{c} \shift \tlabel(u)] \shift \tlabel(w)]$.
This is a straightforward exercise, using the definitions of\ \,$\dshift$\ \,and\ \,$\shift$\ \ given in Section \ref{sec:Preliminaries}:
\begin{align*}
&[[u_{b} \shift u_{r} ] \dshift [u_{c} \shift \tlabel(u)]] \shift [w_{r} \shift \tlabel(w)]\\
=&[[u_{b} \shift u_{r} ] \dshift [u_{c} \shift \tlabel(u)]] \shift [(q_1, \ldots, q_{m-1}) \shift \tlabel(w)]\\
=&[[u_{b} \shift u_{r} ] \dshift [u_{c} \shift \tlabel(u)]] \shift (q_2, \ldots, q_{m-1}, \tlabel(w))\\
=& ( (b_{2,2}, \ldots, b_{2,m-1}, c_2), \ldots, (b_{n-1,2}, \ldots, b_{n-1,m-1}, c_{n-1}) , (r_2, \ldots, r_{m-1}, \tlabel(u)) )\\
   & \shift (q_2, \ldots, q_{m-1}, \tlabel(w))\\
=& ( (b_{3,2}, \ldots, b_{3,m-1}, c_3), \ldots, (b_{n-1,2}, \ldots, b_{n-1,m-1}, c_{n-1}) , (r_2, \ldots, r_{m-1}, \tlabel(u)),\\
   & (q_2, \ldots, q_{m-1}, \tlabel(w)))\,,
\end{align*}
and
\begin{align*}
& [[u_{b} \shift u_{r}] \shift w_{r}] \dshift [[u_{c} \shift \tlabel(u)] \shift \tlabel(w)]\\
=& [[u_{b} \shift u_{r}] \shift w_{r}] \dshift [[(c_1, \ldots, c_{n-1}) \shift \tlabel(u)] \shift \tlabel(w)]\\
=& [( (b_{2,1}, \ldots, b_{2,m-1}), \ldots, (b_{n-1,1}, \ldots, b_{n-1,m-1}), (r_1, \ldots, r_{m-1})) \shift (q_1, \ldots, q_{m-1})]\\
   & \dshift (c_3, \ldots, c_{n-1}, \tlabel(u), \tlabel(w))\\
=& [( (b_{3,1}, \ldots, b_{3,m-1}), \ldots, (b_{n-1,1}, \ldots, b_{n-1,m-1}), (r_1, \ldots, r_{m-1}), (q_1, \ldots, q_{m-1}))\\
   & \dshift (c_3, \ldots, c_{n-1}, \tlabel(u), \tlabel(w))\\   
=& ( (b_{3,2}, \ldots, b_{3,m-1}, c_3), \ldots, (b_{n-1,2}, \ldots, b_{n-1,m-1}, c_{n-1}) , (r_2, \ldots, r_{m-1}, \tlabel(u)),\\
   & (q_2, \ldots, q_{m-1}, \tlabel(w)))\,.   
\end{align*}
\qed
\end{proof}

\end{document}